\NewDocumentCommand{\setupbibcolors}{m}
 {
  \cs_set_protected:Npn \bibitem ##1
   {
    \color{ \str_case:nnF { ##1 } { #1 } { black } }
    \heba_bibitem:n { ##1 }
   }
 }
\definecolor{orange}{rgb}{0.97,0.99,1}
\newcommand{\sir}{\mathrm{SIR}}
\newcommand{\Pb}{\mathbb{P}}
\newcommand{\Eb}{\mathbb{E}}
\def\delequal{\mathrel{\ensurestackMath{\stackon[1pt]{=}{\scriptscriptstyle\Delta}}}}
\newcommand*{\acro}[3][]{\newacronym[#1]{#2}{#2}{#3}}
\newtheorem{remark}{Remark}
\newtheorem{proposition}{Proposition}
\theoremstyle{approximation}
\DeclareMathOperator*{\R}{\mathbb{R}}
\DeclareMathOperator*{\N}{\mathcal{N}}
\def\delequal{\mathrel{\ensurestackMath{\stackon[1pt]{=}{\scriptstyle\Delta}}}}
\let\mybibitem\bibitem
\renewcommand{\bibitem}[1]{%
  \ifstrequal{#1}{nature}
    {\color{blue}\mybibitem{#1}}
    {\color{black}\mybibitem{#1}}%
}
\begin{document}
\title{Optimized Caching and Spectrum Partitioning for D2D enabled Cellular Systems with Clustered Devices}
 
\author{Ramy Amer,~\IEEEmembership{Student~Member,~IEEE,} Hesham~Elsawy,~\IEEEmembership{Senior~Member,~IEEE,} M.~Majid~Butt,~\IEEEmembership{Senior~Member,~IEEE,} Eduard~A.~Jorswieck,~\IEEEmembership{Fellow IEEE,}
Mehdi~Bennis,~\IEEEmembership{Senior~Member,~IEEE,}~and~Nicola~Marchetti,~\IEEEmembership{Senior~Member,~IEEE}
\thanks{Ramy Amer and Nicola~Marchetti are with CONNECT Centre for Future Networks, Trinity College Dublin, Ireland. Email:\{ramyr, nicola.marchetti\}@tcd.ie.}
\thanks{Hesham ElSawy is with King Fahd University of Petroleum and Minerals (KFUPM), Saudi Arabia. Email: hesham.elsawy@kfupm.edu.sa.}
\thanks{M. Majid Butt is with Nokia Bell Labs, France, and CONNECT Centre for Future Networks, Trinity College Dublin, Ireland. Email: Majid.Butt@tcd.ie.}
\thanks{Eduard A. Jorswieck is with Institute for Communications Technology
TU Braunschweig, Germany, Email: jorswieck@ifn.ing.tu-bs.de.}
\thanks{Mehdi Bennis is with the Centre for Wireless Communications, University of Oulu, Finland, and the Department of Computer Engineering, Kyung Hee University, South Korea. Email: mehdi.bennis@oulu.fi.}
\thanks{This publication has emanated from research conducted with the financial support of Science Foundation Ireland (SFI) and is co-funded under the European Regional Development Fund under Grant Number 13/RC/2077.}}
\maketitle

%
%
%

\markboth{
 Accepted for Publication in IEEE TRANSACTIONS ON COMMUNICATIONS, VOL. ?, NO. ?, 2020.}
 {Shell \MakeLowercase{\textit{et al.}}: Bare Demo of IEEEtran.cls for Computer Society Journals}

\begin{abstract} 
Caching at mobile devices and leveraging device-to-device (D2D) communication are two promising approaches to support massive content delivery over wireless networks. The analysis of cache-enabled wireless networks is usually carried out by assuming that devices are uniformly distributed, however, in social networks, mobile devices are intrinsically grouped into disjoint clusters. In this regards, this paper proposes a spatiotemporal mathematical model that tracks the service requests arrivals and account for the clustered devices geometry. Two kinds of devices are assumed, particularly, content clients and content providers. Content providers are assumed to have a surplus memory which is exploited to proactively cache contents from a known library, following a random probabilistic caching scheme. Content clients can retrieve a requested content from the nearest content provider in their proximity (cluster), or, as a last resort, the base station (BS). The developed spatiotemporal model is leveraged to formulate a joint optimization problem of the content caching and spectrum partitioning in order to minimize the average service delay. Due to the high complexity of the optimization problem, the caching and spectrum partitioning problems are decoupled and solved iteratively using the block coordinate descent (BCD) optimization technique. To this end, an optimal and suboptimal solutions are obtained for the bandwidth partitioning and probabilistic caching subproblems, respectively. Numerical results highlight the superiority of the proposed scheme over conventional caching schemes under equal and optimized bandwidth allocations. Particularly, it is shown that the average service delay is reduced by nearly $100\%$ and $350\%$, compared to the Zipf and uniform caching schemes under equal bandwidth allocations, respectively.		
\end{abstract}
\vspace{-0.0 cm}
\begin{IEEEkeywords}
D2D communication, spatiotemporal, probabilistic caching, delay analysis, queuing theory.
\end{IEEEkeywords}
\vspace{-0.0 cm}
\section{Introduction}
Caching at mobile devices significantly improves system performance by facilitating \ac{D2D} communications, which enhances the spectrum efficiency and alleviates the heavy burden on backhaul links \cite{amer20200caching,8412262, 6600983}. Prior works in the literature commonly followed two approaches for the content placement, namely, deterministic placement and probabilistic placement. For deterministic placement, contents are cached and optimized for specific networks in a deterministic manner \cite{6600983,6787081,8412262,Delay-Analysis}. However, in practice, the wireless channels and the geographic distribution of devices are time-variant. This requires frequent updates to the optimal content placement strategy, which is a highly complicated task. To cope with this problem, probabilistic content placement considers that each device randomly caches a subset of content with a certain caching probability \cite{chen2017probabilistic,7248843,7502130}. 
In this paper, we particularly focus on probabilistic content placement. 

Modeling cache-enabled \acp{HetNet}, including \acp{BS} and mobile devices, followed two main directions in the literature. The first line of work, popularly named as protocol model, focuses on the fundamental throughput scaling results by assuming a simple channel model \cite{6600983,6787081,8412262}. In this model, two devices can communicate if they are within a certain distance. The second line of work, defined as the physical interference model, considers a more realistic model for the underlying physical layer \cite{7931641,cache_schedule,8647532,amer2020caching,amer2019performance,8886101}. This physical interference model is based on the fundamental \ac{SIR} metric, and hence, is applicable to any wireless communication system whose performance is measured by some \ac{SIR}-based utility function. In this work, we consider a realistic physical layer model where all transmissions are subject to outage due to fading and interference. Next, we review relevant literature to the \ac{D2D} caching networks.


\vspace{-0.2 cm}
\subsection{State of the Art and Prior Works}
Modeling and analysis of \ac{D2D} communication networks are widely adopted in the literature with the assumption  that the mobile devices are uniformly distributed, especially in the wireless caching area \cite{7931641} and \cite{cache_schedule}. For instance, the authors in \cite{7931641} investigated the relationship between offloading probability of a cache-enabled \ac{D2D} network and the energy cost of caching helpers that are uniformly distributed according to a \ac{PPP}. Meanwhile, the authors in \cite{cache_schedule} jointly optimized caching and scheduling to maximize the offloading probability of a network whose users' locations follow a \ac{PPP}. For such \ac{D2D} caching networks, \ac{PPP} is an appealing analytical framework due to its simplicity and tractability \cite{6042301} and \cite{haenggi2012stochastic}. However, a realistic model for \ac{D2D} caching networks needs to capture the notion of device clustering, which is fundamental to the D2D network architecture \cite{zhang2015social} and \cite{hu2014evaluating}. Clustered \ac{D2D} models imply that a given device typically has multiple proximate devices, where any of them can act as a serving device. Such models can be effectively described by cluster processes \cite{clustered_twc,clustered_tcom,8374852,8536464}. 

In this regard, the authors in \cite{clustered_twc} developed a stochastic geometry-based model to characterize the performance of content placement in a clustered \ac{D2D} network. In particular, the authors proposed different  strategies of content placement in a \ac{PCP}  deployment. Meanwhile, the work in \cite{clustered_tcom} proposed a cluster-centric content placement scheme where the content of interest is cached closer to the cluster center. Moreover, the authors in \cite{8374852} proposed cooperation among the \ac{D2D} transmitters and hybrid caching strategies to save the energy cost of content providers, where the location of these providers is modeled by a \ac{GPP}. Inspired by the Matern hard-core point process, the authors in \cite{8070464} proposed a spatially correlated caching strategy for which devices that cache the same content can not get closer to each other. However, while interesting, the works in \cite{clustered_twc} and \cite{clustered_tcom} only characterized the performance of clustered \ac{D2D} networks with the assumption that contents are pre-cached, i.e., there was no study of the caching problem. Moreover, \cite{8374852} and  \cite{8070464} focused solely  on optimizing the content placement for \ac{D2D} networks to maximize the offloading probability. In particular, there was no study of joint caching and communication, e.g., bandwidth allocation and spectrum access, which is vital to improve important performance metrics, e.g., service delay and spectral efficiency.


%
%
%

Recently, delay analysis and minimization for wireless caching networks have received significant attention, see, e.g.,  \cite{6600983,amer20200caching,8412262} and \cite{yang2016analysis,huang2018energy,7797148}. For instance, \cite{8412262} proposed an inter-cluster collaborative caching architecture to reduce the average service delay for clustered \ac{D2D} networks. Meanwhile, the authors in \cite{6600983} advocated to set up helpers with caching capability in cellular networks to reduce the access delay. The authors in \cite{yang2016analysis} proposed a cache-based content delivery scheme in a three-tier \ac{HetNet} consisting of \acp{BS}, relays, and \ac{D2D} pairs. Particularly, the average throughput and delay are characterized based on a queuing model and continuous-time Markov process. In \cite{huang2018energy}, the authors maximized the energy efficiency in an additive white Gaussian channel with content caching subject to delay constraint. The authors in \cite{7797148} proposed joint request routing and content caching to minimize the average access delay. However, in these works, the authors   studied the content placement problem to analyze and minimize the network delay while the joint optimization of caching and communication was not considered.

Compared with this prior art \cite{clustered_twc,clustered_tcom,8536464,8374852,8070464,yang2016analysis,huang2018energy,7797148},  this paper provides a comprehensive performance analysis and optimization for a clustered \ac{D2D} caching network. In particular, we first develop a spatiotemporal model that accounts for the requests' arrival, the clustered geometry of devices, content caching, and overlay interactions between the cellular and D2D communications. Then, we formulate an optimization problem based on the developed spatiotemporal model aiming to reduce the average service delay. Our approach effectively captures the notion of device clustering and accounts for resource allocation and scheduling of devices. These aspects have not been addressed yet in the literature, especially in the context of the joint design of caching and spectrum partitioning. \emph{To the best of our knowledge, this is the first work to propose a joint caching and spectrum partitioning scheme to reduce the average service delay for clustered D2D networks. Moreover, this paper introduces the first spatiotemporal analysis for D2D cache-enabled networks.} 
The main contributions of this paper are summarized as follows:			
\begin{itemize}
\item We consider a spatially clustered \ac{D2D} caching network in which users with different interests exist in disjoint vicinities, i.e., different popularity profiles per clusters. For this network, we study the content caching and delivery among clusters' devices and propose a joint caching and spectrum partitioning scheme to reduce the average service delay.
\item We develop a spatiotemporal model by combining tools from stochastic geometry and queuing theory, which allows us to account for the notion of device clustering, requests' arrival and service rates, and the  traffic queue dynamics. We then conduct the rate coverage analysis to obtain the service rates of the traffic modeling queues. We can then characterize the conditions under which the network sustains its stability, and obtain the request average service delay as a function of the system parameters. 
\item Towards a minimized service delay, we jointly optimize spectrum partitioning between \ac{D2D} and \ac{BS}-to-device communications and content caching. Given the non-convexity of the joint problem, the bandwidth allocation and caching subproblems are decoupled and solved iteratively using \ac{BCD} optimization technique. In particular, we characterize the optimal bandwidth allocation in a closed-form expression, and a suboptimal solution for the content caching subproblem is also obtained.  
\item Our results reveal that the optimal bandwidth allocation heavily depends on the request arrival rate, popularity of files, and the network geometry. Moreover, it is shown that the average service delay can be significantly reduced by the joint optimization of spectrum partitioning and content caching compared to other benchmark schemes.
\end{itemize}
The rest of this paper is organized as follows. Section II and Section III present, respectively, the system model and  rate coverage probability analysis. The average service delay minimization is then conducted in Section IV. Numerical results are presented in Section V and conclusions are drawn in Section VI. 
\begin{figure}[!t]
\vspace{-0.6 cm}
\centering
\includegraphics[width=0.40 \textwidth]{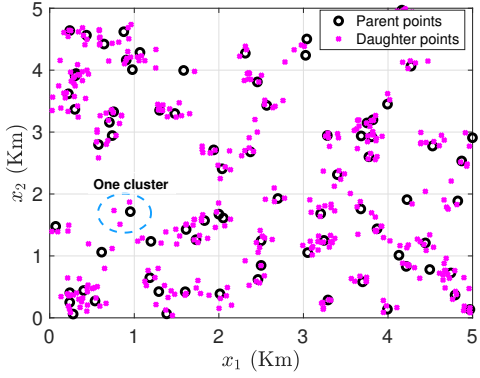}		
\vspace{-0.4cm}			
\caption {A realization of the proposed TCP network whose parent points form PPP $\Phi_p$ of density $\lambda_p=5$ \SI{}{km}$^{-2}$, and the average number of devices per cluster is $\bar{n}=5$. The standard deviation of normal scattering (distribution) is $\sigma=\SI{100}{m}$.} 
\label{sys-model}
\vspace{-0.5 cm}
\end{figure}
\vspace{-0.4 cm}
\section{System Model}	

\subsection{Network Model}
\label{netw-model}
We consider a clustered \ac{D2D} cache-enabled network in which devices can share their cached content within the same cluster. For this network, we model the location of the devices with a \ac{TCP} composed of parent and their corresponding daughter points. A general \ac{TCP} is generated by taking a parent homogeneous \ac{PPP} and daughter Gaussian PPP, one per parent, and translating the daughter processes to the position of their parents \cite{haenggi2012stochastic}. The cluster process is then the union of all the daughter points. Let us denote the parent point process by $\Phi_p = \{\boldsymbol{x}_1,\boldsymbol{x}_2,\dots\}$, where $\boldsymbol{x}_i\in\R^2$, and $i\in\N$, see Fig.~\ref{sys-model}. Further, let $(\Phi_i)$ be a family of finite point sets representing the untranslated daughter Gaussian \acp{PPP}, i.e., untranslated clusters. The cluster process is then the union of the translated clusters: 
\begin{align}
\Phi \delequal \cup_{i\in \N} \boldsymbol{x}_i + \Phi_i.
\end{align}
The parent points and offspring are referred to as cluster centers and cluster members, respectively. We assume that the cluster centers are drawn from the PPP $\Phi_p$ whose density is $\lambda_p$, while the number of cluster members is a Poisson \ac{RV} with a certain mean. We also assume that, for Gaussian \acp{PPP}, the cluster members (daughter points) are normally scattered of variance $\sigma^2 \in \mathbb{R}$ around their cluster centers (parent points) \cite{haenggi2012stochastic}. Given this normal scattering of daughter points, the \ac{PDF} of the cluster member location relative to its cluster center equals

\begin{equation}
f_{\boldsymbol{Y}}(\boldsymbol{y}) = \frac{1}{2\pi\sigma^2}\textrm{exp}\Big(-\frac{\lVert \boldsymbol{y}\rVert^2}{2\sigma^2}\Big),	\quad \quad  \boldsymbol{y} \in \mathbb{R}^2,
\label{pcp}
\end{equation}
where $\boldsymbol{y}\in\R^2$ is the device location relative to its cluster center, $\lVert .\rVert$ is the Euclidean norm. If the average number of devices per cluster is $\bar{n}$, the cluster intensity function will be:

\begin{align}
\lambda_c(\boldsymbol{y}) = \frac{\bar{n}}{2\pi\sigma^2}\textrm{exp}\big(-\frac{\lVert \boldsymbol{y}\rVert^2}{2\sigma^2}\big),	\quad \quad  \boldsymbol{y} \in \mathbb{R}^2. 
\end{align}
Therefore, the intensity of the entire process $\Phi$ will be $\lambda = \bar{n}\lambda_p$. We assume two kind of devices co-exist within the same cluster, namely, content clients and content providers as done in   \cite{8374852}. In particular, the devices that can perform proactive caching and provide content delivery are called content providers while those requesting content are called content clients. All content providers are assumed to have the same transmission power $P_d$.

%

We also consider a tier of \acp{BS}, which are connected to the core network and communicate with the content clients only when \emph{their requests cannot be satisfied by D2D links}. We assume that all the content that might be requested by content clients are available at the \acp{BS}. The \acp{BS}' spatial distribution follows another \ac{PPP} $\Phi_{b}$ with density $\lambda_{b}$, which is independent of $\Phi_p$. All \acp{BS} have the same transmission power $P_b$. We assume an overlay operation such that the BSs and D2D transmitters are assigned non-overlapping frequency bands to avoid cross-tier interference. In particular, the total system bandwidth $W$ is  divided into two portions, $W_b$ for the \ac{BS}-to-device communication, and $W_d=W-W_b$ for D2D communications. The cellular and \ac{D2D} channels are characterized by \ac{NLoS} communications and both large-scale and small-scale fading, e.g., see \cite{clustered_twc,clustered_tcom,8374852,8070464,8536464}. Large-scale fading is represented by the power path loss law $\ell(w)=w^{-\alpha}$, where $w$ is the communication distance and $\alpha>2$ is the path loss exponent. Moreover, the small-scale fading follows a Rayleigh distribution whose channel gain $g_w$ is exponentially distributed with unit mean, i.e., $g_w \sim{\rm exp}(1)$.\footnote{In Section \ref{num-result}, the channel model is revisited to evaluate the performance of a clustered cache-enabled \ac{D2D} network in which intra-cluster devices experience \ac{LoS} communications.} We set all transmit powers to unity and focus on the interference-limited regime, thus, the thermal noise is ignored.

In the adopted system model, we assume that each transmitter, either BS or device, sends codewords drawn from a Gaussian codebook with a fixed rate ${\rm log}(1+\theta)\Pb_c$, where $\theta$ is the target \ac{SIR} threshold and $\Pb_c$ is the rate coverage probability. This fixed rate expression will be explained later in our discussions. A quasi-static channel model is adopted where the channel gain is assumed to be constant during one codeword transmission, and changes from one codeword to another \cite{7733098}. The codeword length is determined based on the underlying coding and modulation schemes, which are captured by the fixed rate expression ${\rm log}(1+\theta)\Pb_c$. Next, we introduce our proposed content caching and delivery schemes.

\vspace{-0.6 cm}
\subsection{Content Popularity and Probabilistic Caching Model}
We assume that each content provider has a surplus memory of size $M$ files, designated for caching contents. 
The total number of contents is $N_f\geq M$ and the set (library) of content indices is denoted as $\mathcal{F} = \{1, 2, \dots , N_f\}$. These contents represent the content catalog that all the devices in a cluster may request, which are indexed in a descending order of popularity. The probability that the $i$-th content is requested follows a Zipf distribution given by \cite{749260},
\begin{equation}
q_i = \frac{ i^{-\beta} }{\sum_{k=1}^{N_f}k^{-\beta}},
\label{zipf}
\end{equation}
where $\beta$ is a parameter that reflects how skewed the popularity distribution is. For example, if $\beta= 0$, the popularity of the contents has a uniform distribution. Increasing $\beta$ increases the disparity among the contents' popularity such that lower indexed contents have higher popularity. By definition, $\sum_{i=1}^{N_f}q_i = 1$. We assume that the popularity of files might differ among different clusters. This is motivated by the fact that disjoint user groups in different locations may have different interests. Hence, we use the Zipf distribution to model the popularity of contents per cluster. The popularity of contents is assumed to be perfectly known.\footnote{Given the time-varying content popularity in practical scenarios, incorporation of estimation errors of content popularity might be required to convey a more conservative study of the network performance \cite{7502130}. As an example of factoring in the time-varying popularity, the classical web caching systems adopt dynamic eviction policies like least recently used in order to combat time-varying content popularity in a heuristic manner \cite{7537172}. However, considering time-varying popularity is out of the scope of this work. For a detailed review of the main results and the literature on the topic of time-varying content popularity, the reader is referred to \cite{8374867,7524380,8063333}.}


\begin{figure}[!tbh]
\vspace{-0.0 cm}
	\begin{center}
		\includegraphics[width=2.2 in]{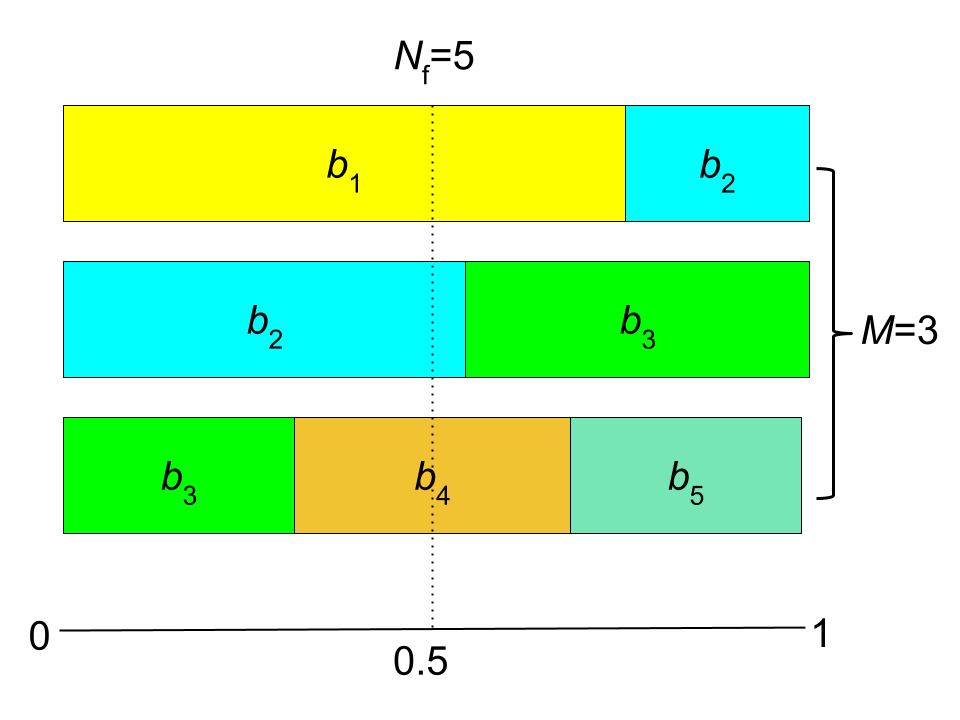} 
		\caption {Illustrative example of the proposed \ac{PC} scheme for a library of files of size $N_f=5$ and maximum cache size per device of $M=3$ files.}
		\label{prob_cache_example}
	\end{center}
\vspace{-0.4 cm}
\end{figure}

For the caching policy, we adopt a random content placement scheme where each content provider independently selects a content to cache according to a specific probability function $\boldsymbol{b} = \{b_1, b_2, \dots, b_{N_{f}}\}$, where $b_i$ is the probability that a device caches the $i$-th content, $0 \leq b_i \leq 1, \forall i\in\mathcal{F}$. To avoid duplicate caching of the same content within the memory of the same device, we follow a probabilistic caching approach proposed in \cite{7248843} and illustrated in Fig.~\ref{prob_cache_example}. An illustrative example of the adopted \ac{PC} is given in Fig.~\ref{prob_cache_example}, for a library size $N_f=5$ and per-device cache size $M = 3$. Firstly, $M$ is equally divided into $3$ (vertical) blocks of unit size. Then, for given caching probabilities $\{b_1,b_2,\dots,b_5\}$, with $\sum_{i=1}^{N_f=5}b_i = M=3$, the three vertical blocks are populated with the $b_i$ values. Finally, a random number $\in [0,1]$ is generated and content $i$ is chosen from each block whose $b_i$ fills the partition  intersecting with the generated random number. In this way, if the random number is 0.5 for the given example in Fig.~\ref{prob_cache_example}, the contents $\{1, 2, 4\}$ are chosen to be cached. The caching probabilities $b_i$, $i\in\mathcal{F}$, can be tuned so as to optimize a particular performance metric, e.g., the average service delay as detailed later. It is worth mentioning that \ac{PC} is a standard caching technique that is widely adopted in the literature, see, e.g., \cite{chen2017probabilistic,7248843,7502130}. Next, we will discuss the content sharing via D2D communications and explain the mechanism of D2D link setup.

\vspace{-0.4 cm}
\subsection{Content Request and Delivery Model} 
Due to the cost of participating in content caching and delivery, e.g., battery consumption and memory utilization, not all content providers can be active in all time slots. Furthermore, some content providers may be busy in their own communications. Hence, within each cluster, we assume that content providers can be available for content delivery with probability $p\in[0,1]$.\footnote{Incentivizes for the devices to cache and deliver contents is a standalone problem that is studied in the literature, see, e.g., \cite{cache_schedule}, however, it is out of our scope in this paper.}



%
%

Within each cluster, we assume a content client device whose distance to its cluster center is drawn from a Rayleigh distribution of scale parameter $\sigma$, according to the \ac{TCP} definition. Throughout time, content clients in different clusters may request files $i\in\mathcal{F}$ with a probability following the assumed per-cluster Zipf distribution in (\ref{zipf}). Similar to \cite{8412262} and \cite{6600983}, we neglect the device self-cache and assume that desired contents can be only brought by \ac{D2D} or \ac{BS}-to-device communications. Since each cluster has its own library, a given content client may either be served via a D2D connection from the nearest active provider within the same cluster or, as a last resort, via the nearest geographical BS. 

For our network, we assume a \ac{BS}-assisted \ac{D2D} link setup that works as follows \cite{8458381}. A content client first sends its request to its geographical closest \ac{BS}, which knows the active content providers, their cashed files, as well as their locations. If there is an active content provider caching the requested file, the BS then establishes a direct D2D link between the content client and its geographically closest active content provider. Otherwise, the content client attaches to this closest BS, as a last resort, to download the desired file. Next, we illustrate the adopted traffic and queueing models.



\begin{figure}
\vspace{-0.0 cm}
	\begin{center}
		\includegraphics[width=3.0 in]{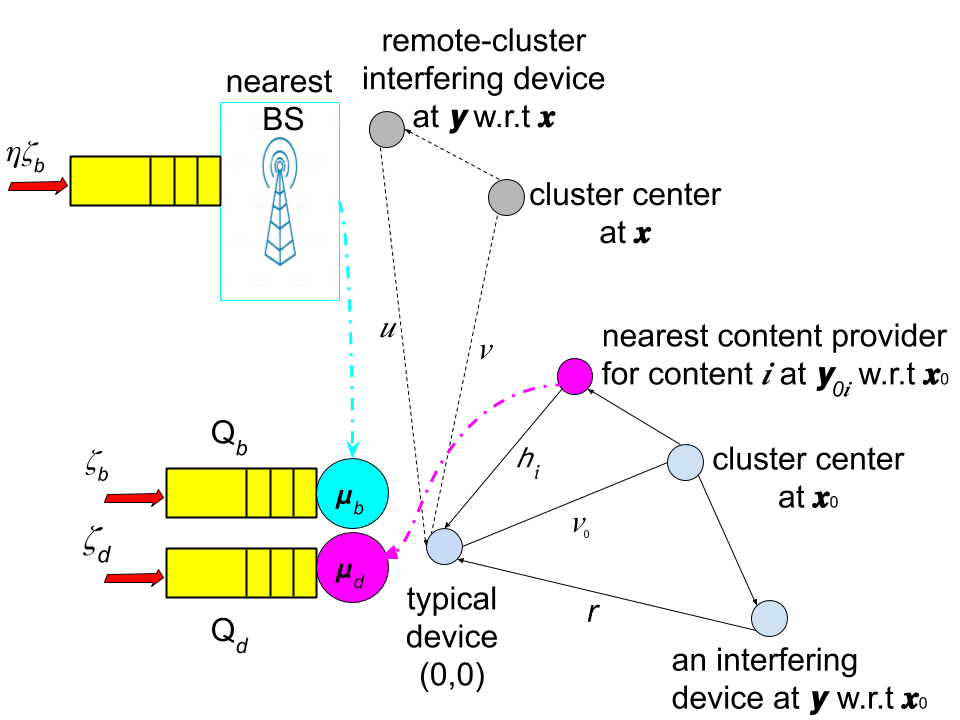}		
		\caption {The traffic and communication models for a content client in a given cluster are shown. Queues Q$_d$ and Q$_b$ are used to model requests served via \ac{D2D} and \ac{BS}-to-device communications, respectively. In the top left, $\eta\zeta_b$ represents the aggregate arrived requests from $\eta$ clients that are served by the same \ac{BS} on the same physical radio resource.}
\label{distance-delay-queue}
	\end{center}
\vspace{-0.4 cm}
\end{figure}
%
\vspace{-0.3 cm}
\subsection{Traffic Model} 
We assume that the per client request arrival follows a Poisson arrival process with parameter $\zeta$ (requests per time slot). As shown in Fig.~\ref{distance-delay-queue}, the incoming requests are further divided according to where they are served from. The arrival rate of requests served via the \ac{D2D} communication is denoted by $\zeta_{d}$, while $\zeta_{b}$ is the arrival rate for those served from the \ac{BS}. By definition, $\zeta_{d}$ and $\zeta_{b}$ are also Poisson arrival processes. Without loss of generality, we assume that the content size has a general distribution $G$ whose mean is denoted as $\overline{S}$ \SI{}{MBytes}. Hence, an M/G/1 queuing model is adopted whereby two non-interacting queues, Q$_d$ and Q$_b$, model the client traffic served via \ac{D2D} and \ac{BS}-to-device communications, respectively. Let $\mu_d$ and $\mu_b$ be the service rates of Q$_d$ and Q$_b$, respectively. Although Q$_d$ and Q$_b$ are non-interacting queues, as the \ac{D2D} communication is assumed to be out-of-band, these two queues are spatially interacting with their similar queues in other clusters (or \acp{BS}), as we will detail later.

We assume that the average number of potentially active devices, i.e., content providers, is $\bar{n}$. Hence, according to the proposed channel access scheme, the set of active devices within the same cluster follows a Poisson \ac{RV} whose mean is $p\bar{n}$. The probability that there are $k$ active devices per cluster is hence equal to $\frac{(p\bar{n})^k e^{-p\bar{n}}}{k!}$. Accordingly, the probability that there exists $k$  active providers for content $i$ is $\frac{(b_ip\bar{n})^k e^{-b_ip\bar{n}}}{k!}$, where $b_i$ is the caching probability. Hence, the probability that there exists at least one active provider for content $i$ is $1$-minus the void probability (i.e., $k=0$), given by $1 - e^{-b_ip\bar{n}}$. It is worth highlighting that Q$_b$ and Q$_d$ account for requests for content demanded by a client, not the content itself. These content requests are assumed of finite-small sizes such that the effect of the limited buffer size of Q$_d$ and Q$_b$ can be ignored. The assumption of finite-small packet sizes of requests and, accordingly, the intangible impact of buffer sizes of Q$_d$ and Q$_b$  is perfectly practical as we deal with requests for content, not the content itself.      
However, if the accumulated packets or requests are of significant sizes compared to the buffer sizes, admission control policies can be imposed. In particular, admission policies could be applied at the device or BS levels such that there is a maximum number of accumulated requests in the device queue or a maximum number of devices to be served by the BS, respectively. This would help limit the queue overflow, i.e., reduce the probability that the number of  accumulated requests exceeds a certain threshold. As such, devices or requests being served undergo finite truncated delay with no probability of overflow at the buffers or experiencing infinite delays. However, this will  happen at the expense of more discarded requests and fewer number of users to be served by the BSs. The spatiotemporal analysis for caching networks with underlying admission policies is beyond the scope of this paper. Nonetheless, it can be a good subject for future work.

Given the Poisson arrival rate $\zeta$ and the Zipf popularity of content, we get
\begin{align}
\label{zeta-d}
\zeta_{d} &= \zeta \sum_{i=1}^{N_f} q_i(1 - e^{-b_ip\bar{n}}),
\\
\label{zeta-b}
\zeta_{b}&= \zeta \sum_{i=1}^{N_f} q_ie^{-b_ip\bar{n}}. 
\end{align}

We assume a \ac{FIFO} scheduling technique whereby a request for content arrives first will be scheduled first either by the \ac{D2D} or \ac{BS}-to-device communication, if the content is cached among the active providers or not, respectively. Notice that the result of FIFO scheduling at BSs only relies on the time when the request arrives at the queue, i.e., it is irrelevant to the particular device that issues the request.

Having obtained the arrival rates, next, we characterize the corresponding service rates of Q$_d$ and Q$_b$. To do that, we will turn our attention to the communication part of our system. We assume a \ac{SIR} threshold model that works as follows: If the \ac{SIR} of  a wireless link is above a threshold $\theta$, the link can be successfully used for information transmission at spectral efficiency ${\rm log}_{2}(1+ \theta)$ $\SI{}{bits/sec/Hz}$. In other words, we adopt a fixed rate transmission scheme, whereby each device (or BS) transmits at the fixed rate of ${\rm log}_2(1+\theta)$ \SI{}{bits/sec/Hz}, where $\theta$ is a design parameter. Note that such definition is widely adopted and accepted in the literature as in \cite{8374852,7733098}, and \cite{zhong2017heterogeneous,zhong2016stability,stamatiou2010random}. Since the rate is fixed, the transmission is prone to outage due to fading and interference fluctuations. Consequently, the de facto average transmissions rate, i.e., average throughput (aka goodput), can be expressed similar to \cite{8374852,7733098}, and \cite{zhong2017heterogeneous,zhong2016stability,stamatiou2010random} as
\begin{equation} 
\label{rate_eqn}
C_l = W_l{\rm log}_{2}(1+ \theta)\Upsilon_l,			
\end{equation}						
where $l\in\{d,b\}$ for the \ac{D2D} and \ac{BS}-to-device communication, respectively, $\Upsilon_l=\mathbb{E}[\boldsymbol{1} \{\sir>\theta\}]$, and $\textbf{1}\{.\}$ is the indicator function. The rate coverage probability $\Upsilon_l$ is defined as the probability that a content is downloadable with an \ac{SIR} higher than a certain threshold $\theta$.


Given the memoryless property of the Poisson arrival process along with the fact that the service process is independent of the arrival process, we reach the following result: The number of requests in each queue at a future time only depends upon the current number in the system (at time $t$) and the arrivals or departures that occur within the interval $e$, i.e., 
\begin{align}
Q_{l}(t+e) =  Q_{l}(t) +   \Lambda_{l}(e)  -   M_{l}(e),  
\end{align}
where $l\in\{d,b\}$, and $\Lambda_{l}(e)$ is the number of arrivals in the time interval $(t,t+e)$ whose mean is 
$\zeta_l$. Similarly,  $M_{l}(e)$ is the number of departures in the same interval, which has the mean
\begin{align}
\label{serv-rate}
\mu_l =\frac{C_l}{\overline{S}} =\frac{W_l{\rm log}_2(1+\theta)\Upsilon_l}{\overline{S}}.  
\end{align}
The mean service time is hence $\tau_l= \frac{1}{\mu_l}=\frac{\overline{S}}{W_l{\rm log}_2(1+\theta)\Upsilon_l}$, which follows the same distribution as the content size. Since $\Upsilon_l=\mathbb{E}[\boldsymbol{1} \{\sir>\theta\}]$, it is clear that the service rates $\mu_l$ depend on the traffic dynamics. This is because a request being served via Q$_d$ (or Q$_b$) in a given cluster is interfered only from other clusters (or \acp{BS}) that serve requests via \ac{D2D} (or \ac{BS}-to-device) communications. More precisely, lower encountered request arrival rates are equivalent to lower interference received at the client device, and vice versa. This observation will be revisited later in our discussions in Section \ref{del-anal}. Having described the underlying spatial and temporal models, next, we introduce our main performance metric, namely, the average service delay. We particularly relate this metric to the underlying spatiotemporal model and explain the proposed joint caching and bandwidth partitioning optimization problem.

\subsection{Minimized Average Service Delay}
Serving a client by fetching its desired content via either D2D or BS-to-device communication is susceptible to  latency. This is because requests are subject to both queuing and transmission delays. Ensuring low latency for applications such as video streaming or \ac{VR} is a critical matter for the \ac{QoE} perceived by the users \cite{chaccour2019Reliability}. In particular, human eyes need to experience smooth sequences of video frames with sufficiently low latency. However, temporary outages due to impairments in the received \ac{SINR} are frequent in wireless environments. Hence, the joint design of caching and communication, e.g., bandwidth allocation, is essential to ensure reduced service delay and improved \ac{QoE}. Therefore, our target is to  study and minimize the per request average service delay incurred to deliver contents to the corresponding devices (clients). 

From our previous discussions on the queuing and communication models, we first observe that the caching probability $b_i$ controls the arrival rates $\zeta_d$ and $\zeta_b$ of Q$_d$ and  Q$_d$, respectively (from (\ref{zeta-d}) and (\ref{zeta-b})). Moreover, the bandwidths $W_d$ and $W_b$ determine their corresponding service rates, namely, $\mu_d$ and $\mu_b$. For instance, if $W_d$ is arbitrarily larger than $W_b$, requests served via the  D2D communication mode might experience lower delay than those served via the cellular communication mode, and vice versa. This, in turn, could yield overall larger weighted average service delay. Moreover, allocating bandwidths to  these two communication modes regardless of their percentage of served traffic will lead to inefficient spectrum utilization. Hence, it is vital to jointly design the caching scheme and allocate the bandwidth between the \ac{D2D} and cellular communications such that the average service delay is efficiently reduced. Finally, to obtain the service rates $\mu_l$ in (\ref{serv-rate}), $l\in\{d,b\}$, we must calculate the rate coverage probabilities $\Upsilon_l$. We next conduct the rate coverage analysis for both \ac{D2D} and cellular communications to obtain the rate coverage probabilities $\Upsilon_l$.		


\vspace{-0.0 cm}
\section{Rate Coverage Probability}  

\subsection{D2D Coverage Probability}
Without loss of generality, we conduct the next analysis for a cluster whose center is at $\boldsymbol{x}_0\in \Phi_p$, referred to as \emph{representative cluster}, and a client device, henceforth called \emph{typical device}, located at the origin. We denote the location of the nearest active provider for content $i$ by $\boldsymbol{y}_{0i}$   relative to the cluster center $\boldsymbol{x}_0$. From Fig. \ref{distance-delay-queue}, the distance from the typical  device to this provider is $h_i=\lVert \boldsymbol{x}_0+\boldsymbol{y}_{0i}\rVert$. We first characterize the \ac{PDF} $f_{H_i}(h_i)$, then, we readily obtain the rate coverage probability and transmission rate.

That being said, that average number of content providers per cluster is set to be $\bar{n}$. Hence, active content providers for file $i\in\mathcal{F}$ form a Gaussian \ac{PPP} $\Phi_{ci}$ of intensity $\lambda_{ci}(y) = b_i p \lambda_c(y)= \frac{b_ip\bar{n}}{2\pi\sigma^2}\textrm{exp}\big(-\frac{\lVert \boldsymbol{y}\rVert^2}{2\sigma^2}\big)$ by the thinning theorem \cite{haenggi2012stochastic}.  Based on that, the \ac{PDF} of the distance to the nearest active provider for content $i$ can be derived as follows.  
\begin{lemma}
\label{near-dist}
The \ac{PDF} of the distance to the nearest active provider for content $i$ is given by:
\begin{align}
\label{near-dist00}
 f_{H_i}(h_i) = &b_ip\bar{n} \int_{v_0=0}^{\infty} f_{V_0}(v_0) f_{H_i|V_0}(h_i|v_0) \times
 \nonumber  \\
 &{\rm exp}\Big(-b_ip\bar{n}\int_{0}^{h_i}f_{H|V_0}(h|v_0)\dd{h}\Big)\dd{v_0}, 
 \end{align}
where $f_{V_0}(v_0)=\mathrm{Rayleigh}(v_0,\sigma)$ is a Rayleigh \ac{PDF} of a scale parameter $\sigma$, which models the distance between the typical device and its cluster center; $f_{H|V_0}(h|v_0)=\mathrm{Rice} (h| v_0, \sigma)$ is a Rician \ac{PDF} of parameter $\sigma$ that models the distance $H=\lVert\boldsymbol{x}+\boldsymbol{y}\rVert$ between an intra-cluster device located at $\boldsymbol{y}$ relative to the cluster center and the origin $(0,0)$. 
\end{lemma}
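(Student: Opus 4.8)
The plan is to condition on the distance $v_0 = \lVert\boldsymbol{x}_0\rVert$ from the typical device (at the origin) to its own cluster center, which turns the planar nearest-point problem into a one-dimensional one with an explicit radial distance law, and then to average over $v_0$. First I would justify the two building-block densities quoted in the statement. Because the typical device is itself a cluster member, its displacement from $\boldsymbol{x}_0$ has independent zero-mean Gaussian coordinates of variance $\sigma^2$ by the scattering model \eqref{pcp}; hence $V_0 = \lVert\boldsymbol{x}_0\rVert$ is Rayleigh with scale $\sigma$, i.e. $f_{V_0}(v_0)=\mathrm{Rayleigh}(v_0,\sigma)$. Next, given that the center lies at distance $v_0$ from the origin, a provider with displacement $\boldsymbol{y}$ relative to the center occupies the absolute position $\boldsymbol{x}_0+\boldsymbol{y}$, a Gaussian vector whose mean has norm $v_0$ and whose coordinates have variance $\sigma^2$; the norm $H=\lVert\boldsymbol{x}_0+\boldsymbol{y}\rVert$ of such a vector is Rician, so $f_{H|V_0}(h|v_0)=\mathrm{Rice}(h|v_0,\sigma)$.

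With these in hand, I would compute the conditional nearest-distance law. The active providers for content $i$ form the thinned Gaussian PPP $\Phi_{ci}$ of intensity $\lambda_{ci}(\boldsymbol{y})=b_ip\bar{n}\,f_{\boldsymbol{Y}}(\boldsymbol{y})$, i.e. $b_ip\bar{n}$ times the displacement density. Conditioned on $v_0$, the expected number of providers within distance $h_i$ of the origin is $\int_{\lVert\boldsymbol{x}_0+\boldsymbol{y}\rVert\le h_i}\lambda_{ci}(\boldsymbol{y})\,\dd{\boldsymbol{y}}=b_ip\bar{n}\int_{0}^{h_i}f_{H|V_0}(h|v_0)\,\dd{h}$, where the change of variables from the planar region $\{\boldsymbol{y}:\lVert\boldsymbol{x}_0+\boldsymbol{y}\rVert\le h_i\}$ to the radial integral of the Rice density is exactly what converts the planar intensity into $f_{H|V_0}$. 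The void probability of a Poisson process then gives $\Pb(H_i>h_i\mid v_0)=\exp\!\big(-b_ip\bar{n}\int_{0}^{h_i}f_{H|V_0}(h|v_0)\,\dd{h}\big)$, and differentiating one minus this conditional CCDF in $h_i$ yields the conditional PDF $f_{H_i|V_0}(h_i|v_0)=b_ip\bar{n}\,f_{H|V_0}(h_i|v_0)\exp\!\big(-b_ip\bar{n}\int_{0}^{h_i}f_{H|V_0}(h|v_0)\,\dd{h}\big)$. De-conditioning through $f_{H_i}(h_i)=\int_{0}^{\infty}f_{V_0}(v_0)\,f_{H_i|V_0}(h_i|v_0)\,\dd{v_0}$ and displaying the constant $b_ip\bar{n}$ outside the integral reproduces \eqref{near-dist00}.

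The main obstacle is conceptual rather than computational: the typical device (client) and its candidate providers all belong to the same cluster, so their locations are coupled through the shared random center $\boldsymbol{x}_0$. The step that legitimizes the void-probability calculation is the observation that, once $\boldsymbol{x}_0$ (equivalently $v_0$) is fixed, the provider process is an ordinary inhomogeneous PPP that is independent of the client, so the count within a ball about the origin is genuinely Poisson with the cumulative-Rice mean derived above; the Palm conditioning that fixes the client at the origin enters only through the Rayleigh law of $V_0$ and does not thin the provider process, since clients and providers are distinct point sets. I would therefore state the independence-given-$\boldsymbol{x}_0$ explicitly and verify the radial change of variables carefully, after which the remaining manipulations are routine.
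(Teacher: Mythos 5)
Your proposal is correct and follows essentially the same route as the paper's proof: both compute the void probability of the thinned inhomogeneous Gaussian PPP $\Phi_{ci}$ in a ball around the origin (the paper obtains it as the PGF $G_N(\vartheta)$ evaluated at $\vartheta=0$), convert the planar intensity integral into the cumulative Rice density conditioned on $v_0$, and differentiate to obtain the PDF. The only cosmetic differences are that you invoke the Poisson void probability directly rather than going through the PGFL, and you differentiate the conditional CCDF before de-conditioning over $v_0$, whereas the paper de-conditions first and then applies the Leibniz rule under the integral --- equivalent operations.
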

\begin{proof}
Please see Appendix \ref{near-dist-proof} for the proof.
\end{proof}

In Fig.~\ref{near-serv-dist}, we verify the accuracy of the derived \ac{PDF} $ f_{H_i}(h_i)$ in (\ref{near-dist00}). The figure shows that the analytical expression matches the simulation well. Fig.~\ref{near-serv-dist} also shows that the distance to the nearest provider statistically decreases when $b_i$ increases. This is intuitive because it is more likely to have closer providers when they cache the  content with higher probabilities.  

It is worth highlighting that, when a desired content $i$ is downloaded from the nearest active provider, the distance \ac{PDF} $f_{H_i}(h_i)$ depends on the caching probability $\boldsymbol{b}$. This is different from the case when the  content is delivered by a randomly-selected provider within the same cluster. To illustrate, from the definition of Gaussian \ac{PPP}, the \ac{PDF} of the distance from the client to a randomly-selected provider is $f_R(r)=\frac{r}{2\sigma^2}e^{-\frac{r^2}{4\sigma^2}}$, which is, remarkably, independent of the caching probability $\boldsymbol{b}$. This leads to that, if a content is brought from a randomly-selected  provider, all contents are downloadable  with the same average rate irrespective of their caching probability. However, when downloading from nearest providers, this yields different rates for different content. This fact is crucial in the delay analysis, as we will discuss in Section \ref{del-anal}.



\begin{figure}  [!t]
\vspace{-0.4 cm}
\centering
{ \hspace*{-0.0 in}
  \subfigure[$b_i=1$]{\includegraphics[width=3.0 in]{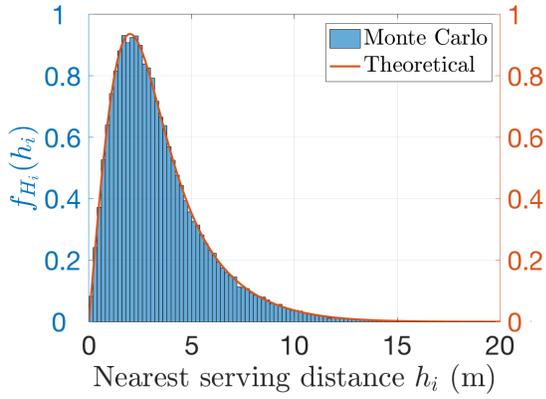}		 
\label{delay_compare}}
  \subfigure[$b_i=0.5$]{\includegraphics[width=3.0 in]{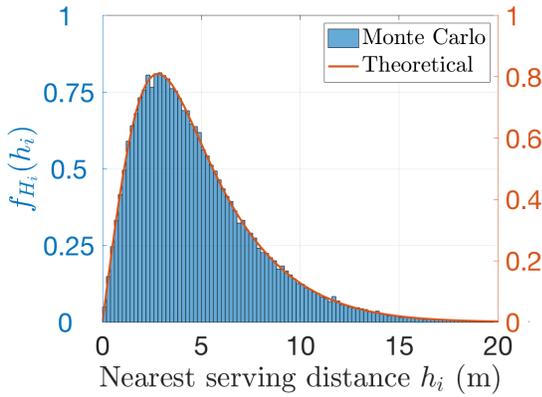}		 
\label{delay_compare}}
  }
\caption{The \ac{PDF} of the nearest serving distance is presented versus $h_i$ at different caching probabilities: ($\bar{n}=20$, $\sigma=\SI{5}{m}$, and $p=0.5$).}
\label{near-serv-dist}
\vspace{-0.6cm}
\end{figure}

Having characterized the nearest distance \ac{PDF} $f_{H_i}(h_i)$, next, we conduct the \ac{D2D} coverage probability analysis. The received power at the client device from the nearest content provider located at $\boldsymbol{y}_{0i}$ relative to the cluster center at $\boldsymbol{x}_0$ is given by 
\begin{align}
P_i &= P_d  g_{0i} \lVert \boldsymbol{x}_0+\boldsymbol{y}_{0i}\rVert^{-\alpha}= P_d  g_{0i} h_i^{-\alpha}	
\label{pwr}
\end{align}
where $g_{0i} \sim {\rm exp}(1)$ is the fading of the desired link. The typical device sees two types of interference, particularly, intra-and inter-cluster interference. The former is generated from active devices within the same cluster while the latter is produced from active devices in the remote clusters. The set of active devices in a remote cluster is denoted as $\mathcal{B}^p$, where $p$ refers to the access probability. Similarly, the set of active devices in the local cluster is denoted as $\mathcal{A}^p$. The inter-cluster interference is hence given by 
\begin{align}
I_{\Phi_p^{!}} &= \sum_{\boldsymbol{x} \in \Phi_p^{!}} \sum_{\boldsymbol{y} \in \mathcal{B}^p}   P_d g_{\boldsymbol{y}_{\boldsymbol{x}}}  \lVert \boldsymbol{x}+\boldsymbol{y}\rVert ^{-\alpha}
=  \sum_{\boldsymbol{x}\in \Phi_p^{!}} \sum_{\boldsymbol{y} \in \mathcal{B}^p}  P_d g_{u}  u^{-\alpha},
\end{align}
where $\Phi_p^{!}=\Phi_p \setminus \boldsymbol{x}_0$, $\boldsymbol{y}\in \mathcal{B}^p$ are the locations of interfering devices relative to their cluster center at $\boldsymbol{x}\in\Phi_p^{!}$, and $u = \lVert \boldsymbol{x}+\boldsymbol{y}\rVert $ are the inter-cluster interfering distances, see Fig. \ref{distance-delay-queue};  $g_{\boldsymbol{y}_{\boldsymbol{x}} } \sim {\rm exp}(1)$ is the fading of an interfering link, and $g_{u} = g_{\boldsymbol{y}_{\boldsymbol{x}} }$ for ease of notation. For the intra-cluster interference, we further divide $\mathcal{A}^p$ into two sets, namely, $\mathcal{A}_1^p$ and $\mathcal{A}_2^p$. We let $\mathcal{A}_1^p$ and $\mathcal{A}_2^p$ be the set of active devices farther, and closer than the distance $h_i$, respectively. Clearly, active devices belonging to $\mathcal{A}_2^p$ do not cache the desired content $i$. Hence, we have 
\begin{align}
I_{\Phi_c}  &= \sum_{\boldsymbol{y} \in \mathcal{A}_1^p}   P_d g_{\boldsymbol{y}_{\boldsymbol{x}_0}}  \lVert \boldsymbol{x}_0+\boldsymbol{y}\rVert ^{-\alpha} + 
\sum_{\boldsymbol{y} \in \mathcal{A}_2^p}   P_d g_{\boldsymbol{y}_{\boldsymbol{x}_0}}  \lVert \boldsymbol{x}_0+\boldsymbol{y}\rVert ^{-\alpha}
\nonumber \\
&=   \sum_{\boldsymbol{y} \in \mathcal{A}_1^p}  P_d g_{r}  r^{-\alpha}+
\sum_{\boldsymbol{y} \in \mathcal{A}_2^p}  P_d g_{r}  r^{-\alpha}, 
\end{align}
where $\boldsymbol{y}\in\mathcal{A}^p$ are the locations of intra-cluster interfering devices relative to their cluster center at $\boldsymbol{x}_0 \in \Phi_p$, and $r = \lVert \boldsymbol{x}_0+\boldsymbol{y}\rVert$ are the intra-cluster interfering distances, see Fig. \ref{distance-delay-queue}. We substitute $g_{r} = g_{\boldsymbol{y}_{\boldsymbol{x}_0}}$ for ease of notation, where $g_{\boldsymbol{y}_{\boldsymbol{x}_0}} \sim {\rm exp}(1)$. Hence, the received \ac{SIR} at the typical device for content $i$ will be:
\begin{equation}
\gamma_{h_i}= \frac{P}{I_{\Phi_p^{!}} + I_{\Phi_c}} = \frac{P_d  g_{0i} h_i^{-\alpha}}{I_{\Phi_p^{!}} + I_{\Phi_c}}.
\end{equation}

The probability that the typical device receives content $i$ via \ac{D2D} communications with an \ac{SIR} higher than a target threshold $\theta$, i.e., the rate coverage probability, can be obtained from:
\begin{align}
\Upsilon_i= \Pb(\gamma_{h_i}>\theta)&= \Eb \Big[ \Pb \Big(\frac{P_d  g_{0i} h_i^{-\alpha}}{I_{\Phi_p^{!}} + I_{\Phi_c}} > \theta \Big) \Big]	 
\nonumber  \\
&=
 \Eb \Big[ \Pb\Big( g_{0i} >  \frac{\theta h_i^{\alpha}}{P_d} [I_{\Phi_p^{!}} + I_{\Phi_c}]  \Big) \Big]
\nonumber  \\
\label{prob-R1-g-R0}
&\overset{(a)}{=} \mathbb{E}_{I_{\Phi_p^{!}},I_{\Phi_c},h_i}\Big[{\rm exp}\big(\frac{-\theta h_i^{\alpha}}{P_d}{[I_{\Phi_p^{!}} + I_{\Phi_c}] }\big)\Big]
\nonumber  \\
&\overset{(b)}{=}  \Eb_{h_i} \mathscr{L}_{I_{\Phi_p^{!}}}(s) \mathscr{L}_{I_{\Phi_c}} (s),
\end{align}			
where (a) follows from $g_{0i} \sim {\rm exp}(1)$, and (b) follows from the assumption that intra- and inter-cluster interference are independent and the Laplace transform of them, with $s=\frac{\theta h_i^{\alpha}}{P_d}$. 
Next, we derive the Laplace transform of intra- and inter-cluster interference. 
\begin{lemma}
\label{intra-cluster}
Laplace transform of the intra-cluster interference can be approximated by
\begin{align}   
\label{LT_intra}
         \mathscr{L}_{I_{\Phi_c} }(s) & \approx  
         {\rm exp}\Big(-p\bar{n}\bar{b_i}\int_{r=0}^{h_i}\frac{sf_R(r)}{s+ r^{\alpha}}\dd{r}\Big) \times
         \nonumber  \\
         & {\rm exp}\Big(-p\bar{n} \int_{r=h_i}^{\infty}\frac{sf_R(r)}{s+ r^{\alpha}}\dd{r}\Big),
\end{align} 
where $\bar{b_i}=1-b_i$, and $f_R(r) =\mathrm{Rayleigh}(r,\sqrt{2}\sigma)$ is a Rayleigh \ac{PDF} of a scale parameter $\sqrt{2}\sigma$, which models the distance between an intra-cluster interfering device to the typical device.  
\end{lemma}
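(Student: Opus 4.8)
The plan is to evaluate $\mathscr{L}_{I_{\Phi_c}}(s)=\Eb[e^{-sI_{\Phi_c}}]$ directly from the point-process description of the active intra-cluster devices. First I would recall that, by the thinning theorem, the active providers of content $i$ form a Gaussian PPP $\Phi_{ci}$ of intensity $b_ip\lambda_c(\boldsymbol{y})$, while the remaining active devices that do not cache content $i$ form an independent Gaussian PPP of intensity $\bar{b_i}p\lambda_c(\boldsymbol{y})$, with $\bar{b_i}=1-b_i$. Conditioning on the serving distance $h_i$, the near set $\mathcal{A}_2^p$ of interferers closer than $h_i$ can contain \emph{only} devices that do not cache content $i$ (otherwise a nearer active provider would exist, contradicting the definition of $h_i$), so its interferers carry intensity $\bar{b_i}p\lambda_c$; the far set $\mathcal{A}_1^p$ of interferers beyond $h_i$ comprises all active devices (intensity $p\lambda_c$), since deleting the single serving point leaves the interference integral unchanged. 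This is the origin of the two distinct prefactors $p\bar{n}\bar{b_i}$ and $p\bar{n}$ in the two exponential terms.

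Next I would factor the Laplace transform over the points of these Poisson processes and average over the independent Rayleigh fading. Writing $r$ for the distance from an interferer to the typical device and using $g_r\sim\mathrm{exp}(1)$, the per-interferer fading average is $\Eb_{g_r}[e^{-sg_rr^{-\alpha}}]=(1+sr^{-\alpha})^{-1}=r^{\alpha}/(r^{\alpha}+s)$, so that $1-\Eb_{g_r}[\,\cdot\,]=s/(s+r^{\alpha})$, which is exactly the kernel appearing inside both integrals. Applying the probability generating functional of a Poisson process of mean $p\bar{n}$ points then converts each product into an exponential of the form $\exp\!\big(-c\int (s/(s+r^{\alpha}))f(r)\,\dd{r}\big)$; restricting to $r<h_i$ with the thinned intensity and to $r>h_i$ with the full intensity yields precisely the claimed factorization into a near-field term over $(0,h_i)$ and a far-field term over $(h_i,\infty)$.

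The step that carries the \emph{approximation}, and the one I expect to be the main obstacle to make rigorous, is the choice of the distance density $f(r)$. Exactly, conditioned on the typical device's offset $v_0$ from its cluster centre, each interferer lies at distance $\lVert\boldsymbol{x}_0+\boldsymbol{y}\rVert$ which is Rician with parameters $(v_0,\sigma)$ and the interferers are conditionally independent, so the exact transform would carry an outer expectation $\Eb_{V_0}[\,\cdot\,]$ over this Rician kernel. The approximation replaces $f_{H|V_0}(r\,|\,v_0)$ by its $v_0$-average, namely the unconditional distance density $f_R(r)=\mathrm{Rayleigh}(r,\sqrt{2}\sigma)$; the scale $\sqrt{2}\sigma$ arises because the displacement $\boldsymbol{x}_0+\boldsymbol{y}$ is a sum of two independent zero-mean Gaussians, each of per-component variance $\sigma^2$, hence a Gaussian of per-component variance $2\sigma^2$. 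Equivalently, this amounts to pushing $\Eb_{V_0}$ inside the exponential, a Jensen-type exchange that uses $\Eb_{V_0}[f_{H|V_0}(r\,|\,V_0)]=f_R(r)$ and thereby decouples the typical device's position from the interferer geometry, collapsing the outer expectation and removing the coupling between $v_0$ and $h_i$. I would close by noting that the same displacement argument legitimizes using $f_R$ rather than the conditional Rician density throughout, and by checking numerically, as is done for $f_{H_i}$ in the preceding figure, that the decoupling is accurate in the clustered regime of interest.
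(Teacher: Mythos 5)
Your proposal is correct and follows essentially the same route as the paper's proof: the same thinning of the intra-cluster interferers into a near set (intensity $\bar{b_i}p\lambda_c$, since no provider of content $i$ can lie closer than $h_i$) and a far set (intensity $p\lambda_c$), the same Rayleigh-fading average producing the kernel $s/(s+r^{\alpha})$, the same PGFL step, and the same approximation — the paper phrases it as neglecting the correlation induced by the shared cluster-center offset $\boldsymbol{x}_0$, which is exactly your exchange of $\Eb_{V_0}$ with the exponential replacing the conditional Rician density by its marginal $\mathrm{Rayleigh}(\sqrt{2}\sigma)$ density. The two framings are mathematically identical, so there is nothing substantive to add.
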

\begin{proof}
Please see Appendix \ref{app-LT_intra}.
\end{proof}
The accuracy of the adopted approximation in Lemma \ref{intra-cluster} will be verified via simulations. Having characterized Laplace transform of the intra-cluster interference, next, we similarly obtain Laplace transform of the inter-cluster interference.
\begin{lemma}
\label{inter-cluster}
Laplace transform of the inter-cluster aggregate interference $I_{\Phi_p^{!}}$ is given by 
\begin{align}
 \label{LT_inter}
\mathscr{L}_{I_{\Phi_p^{!}}}(s) &= {\rm exp}\Big(-2\pi\lambda_p \int_{v=0}^{\infty}\Big(1 -  {\rm e}^{-p\bar{n} \varphi(s,v)}\Big)v\dd{v}\Big),
\end{align}
where $s=\frac{\theta h_i^{\alpha}}{P_d}$, $\varphi(s,v) = \int_{u=0}^{\infty}\frac{s}{s+ u^{\alpha}}f_U(u|v)\dd{u}$, and $f_U(u|v) = \mathrm{Rice} (u| v, \sigma)$ is a Rician \ac{PDF} of parameter $\sigma$ that models the distance $U=\lVert\boldsymbol{x}+\boldsymbol{y}\rVert$ between an interfering device at $\boldsymbol{y}$ relative to its cluster center at $\boldsymbol{x} \in \Phi_{p}$ and the origin $(0,0)$, conditioned on $V=\lVert\boldsymbol{x}\rVert=v$.
\end{lemma}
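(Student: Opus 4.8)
The plan is to evaluate $\mathscr{L}_{I_{\Phi_p^{!}}}(s)=\Eb[\exp(-sI_{\Phi_p^{!}})]$ directly, exploiting the hierarchical structure of the Thomas cluster process through a two-level application of the probability generating functional (PGFL). With $P_d=1$, I would first write the Laplace transform as a nested product over the remote parent points $\boldsymbol{x}\in\Phi_p^{!}$ and the active interferers $\boldsymbol{y}\in\mathcal{B}^p$ of each remote cluster, and then deconditioning. The innermost step is to average over the i.i.d.\ fading: since $g_u\sim\exp(1)$ and the fades are independent across links, $\Eb_{g_u}[\exp(-sg_uu^{-\alpha})]=(1+su^{-\alpha})^{-1}$, and hence $1-(1+su^{-\alpha})^{-1}=\frac{s}{s+u^{\alpha}}$, which is precisely the integrand that will appear in $\varphi(s,v)$.

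Next I would treat the inner sum over the active members of a single remote cluster centered at $\boldsymbol{x}$. By the thinning argument invoked just before the lemma, these active members form a Gaussian PPP of intensity $p\lambda_c(\boldsymbol{y})=\frac{p\bar{n}}{2\pi\sigma^2}\exp(-\lVert\boldsymbol{y}\rVert^2/2\sigma^2)$. Applying the PGFL of this Poisson process, with per-point factor equal to the fading-averaged term above, yields $\exp\big(-p\int_{\Rs}\frac{s}{s+u^{\alpha}}\lambda_c(\boldsymbol{y})\dd{\boldsymbol{y}}\big)$, where $u=\lVert\boldsymbol{x}+\boldsymbol{y}\rVert$. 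The crucial change of variables is to condition on $\lVert\boldsymbol{x}\rVert=v$ and observe that, because $\boldsymbol{y}$ is an isotropic Gaussian vector of per-coordinate scale $\sigma$, the distance $U=\lVert\boldsymbol{x}+\boldsymbol{y}\rVert$ from the origin to the interferer is Rician with non-centrality $v$ and scale $\sigma$, i.e.\ $f_U(u|v)=\mathrm{Rice}(u|v,\sigma)$. This collapses the planar integral to the radial integral $\bar{n}\varphi(s,v)$, so a single remote cluster contributes exactly $\exp(-p\bar{n}\varphi(s,v))$.

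Finally I would apply the PGFL of the parent process. Since $\Phi_p^{!}=\Phi_p\setminus\boldsymbol{x}_0$ is the process seen from the representative cluster center, Slivnyak's theorem permits treating $\Phi_p^{!}$ as a homogeneous PPP of the same density $\lambda_p$. The outer PGFL, with per-parent factor $\exp(-p\bar{n}\varphi(s,\lVert\boldsymbol{x}\rVert))$, gives $\exp\big(-\lambda_p\int_{\Rs}(1-e^{-p\bar{n}\varphi(s,\lVert\boldsymbol{x}\rVert)})\dd{\boldsymbol{x}}\big)$; passing to polar coordinates with $v=\lVert\boldsymbol{x}\rVert$ yields the exponent $-2\pi\lambda_p\int_{0}^{\infty}(1-e^{-p\bar{n}\varphi(s,v)})v\dd{v}$, which is exactly the claimed expression.

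The main obstacle I anticipate is the careful bookkeeping of the two nested PGFL evaluations and, above all, correctly identifying the conditional distance law of $U$ as Rician: one must hold the cluster center $\boldsymbol{x}$ fixed, regard $\boldsymbol{x}+\boldsymbol{y}$ as a Gaussian vector centered at $\boldsymbol{x}$, and then marginalize its norm conditioned on $\lVert\boldsymbol{x}\rVert=v$. A secondary point requiring justification is the appeal to Slivnyak's theorem to discard the $\boldsymbol{x}_0$ term without altering the law of $\Phi_p^{!}$; unlike the intra-cluster case in Lemma~\ref{intra-cluster}, no approximation is needed here, since conditioned on their parents the inter-cluster interferers are statistically homogeneous across all remote clusters.
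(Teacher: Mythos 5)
Your proposal is correct and follows essentially the same route as the paper's proof in Appendix C: average over the Rayleigh fading to get the factor $\frac{s}{s+u^{\alpha}}$, apply the PGFL of the thinned Gaussian daughter PPP within each remote cluster, apply the PGFL of the parent PPP, and convert to polar coordinates where the conditional distance $U=\lVert\boldsymbol{x}+\boldsymbol{y}\rVert$ given $\lVert\boldsymbol{x}\rVert=v$ is Rician. The only cosmetic difference is ordering (you collapse each cluster's integral to $p\bar{n}\varphi(s,v)$ before the outer PGFL, whereas the paper applies both PGFLs first and changes variables at the end), and your explicit appeal to Slivnyak's theorem makes rigorous a step the paper leaves implicit.
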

\begin{proof}
Please see Appendix \ref{app-LT_inter}.
\end{proof} 	

\begin{figure}[!t] 
\vspace{-0.4 cm}
	\begin{center}
		\includegraphics[width=3.5in]{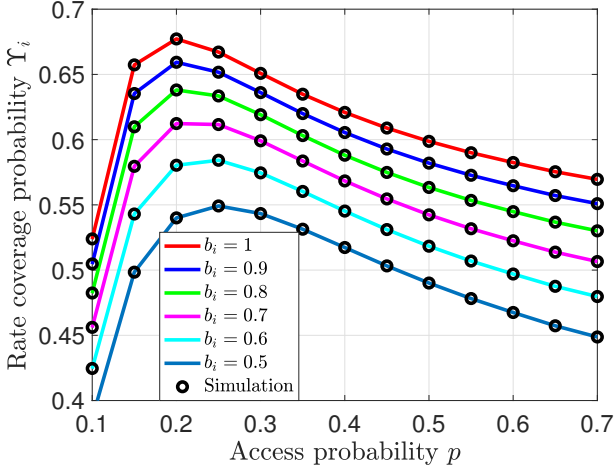}	
		\caption {The \ac{D2D} rate coverage probability $\Upsilon_i$ versus the access probability $p$ ($\lambda_p=\SI{50}{km^{-2}},\sigma=\SI{10}{m},\bar{n}=20,\theta=\SI{0}{dB}$).}
		\label{cov_prob_vs_sigma_new}
	\end{center}
\vspace{-0.7cm}
\end{figure} 

From Lemma \ref{intra-cluster} and  \ref{inter-cluster}, the rate coverage probability of content $i\in \mathcal{F}$ will be given by:
\begin{align}  
\label{file-rate-cov-prob}
 \Upsilon_i = \int_{h_i=0}^{\infty}\Pb(\gamma_{h_i}>\theta)f_{H_i}(h_i)\dd{h_i}. 
\end{align} 

In Fig.~\ref{cov_prob_vs_sigma_new}, we plot the \ac{D2D} rate coverage probability $\Upsilon_i$ against the channel access probability $p$ at different caching probability $b_i$. Fig.~\ref{cov_prob_vs_sigma_new} first verifies the accuracy of the approximation in Lemma \ref{intra-cluster}. Moreover, it is clear that the rate coverage probability for content $i$ increases with its caching probability $b_i$. This is because contents that are cached with higher probabilities are reachable with statistically shorter distances, and, subsequently, stronger received signal power. Moreover, Fig.~\ref{cov_prob_vs_sigma_new} also illustrates that $\Upsilon_i$ first increases with $p$ driven by the increasing probability of channel access. Then, $\Upsilon_i$ declines as $p$ increases due to the effect of adverse interference conditions. \emph{The access probability can be treated as an input parameter that reflects the interest of devices to participate in content sharing based on various parameters, e.g., their battery levels or cache sizes.}

\vspace{-0.2 cm}
\subsection{\ac{BS}-to-Device Coverage Probability}
If a content client attaches to the nearest BS from $\Phi_{b}$ to download a desired content, the \ac{BS}-to-device  coverage probability can be expressed similar to \cite{6042301} as: 
$\Upsilon_b =\frac{1}{{}_2 F_1(1,-\delta;1-\delta;-\theta)}$, 
where ${}_2 F_1(.)$ is the Gaussian hypergeometric function and $\delta = 2/\alpha$. 

Having obtained the rate coverage probabilities for \ac{D2D} and BS-to-device links, next, we characterize the average service delay and formulate the average service delay minimization problem. 

\vspace{-0.4cm}
\section{Delay Analysis and Minimization}
\label{del-anal}
In this section, we conduct the delay analysis and formulate the delay minimization problem. To do so, we start by characterizing the average service rates and corresponding service delays of the queues Q$_d$ and Q$_b$, which model requests served via D2D and BS-to-device communications, respectively.

\vspace{-0.2 cm}
\subsection{D2D Service Rate and Service Delay} 
Having explained, the rate coverage probability depends on its caching probability $b_i$, since $b_i$ impacts the distance distribution to the nearest active provider as given in (\ref{near-dist00}). Additionally, given that $C_i = W_d{\rm log}_{2}(1+ \theta)\Upsilon_i$ and $\mu_i=\frac{C_i}{\bar{S}}$, $i\in\mathcal{F}$, different files are downloadable with different rates via \ac{D2D} communications.\footnote{In \cite{8412262} and \cite{yang2016analysis}, different service rates are equivalent to different transmission modes of serving files whereas each mode has a different transmission rate. However, even though Q$_d$ models the \ac{D2D} transmission mode, i.e., only one transmission mode, different contents are downloadable with different rates based on the content availability.}  
 Similar to \cite{8412262} and \cite{yang2016analysis}, we assume that the content size follows an exponential distribution of mean $\overline{S}$, hence, the content service time also obeys an exponential distribution with means $\tau_i=\frac{1}{\mu_i}$. 
Moreover, $Q_d$ represents a \emph{\ac{MPSQ}} whose arrival rate and service rates are  $\zeta_d$  and $\mu_i$,  $i\in\mathcal{F}$, respectively \cite{collings1974queueing}. This is because different contents have different \ac{D2D} service rates, see, e.g.,  \cite{8412262} and \cite{yang2016analysis}. Given the Poisson arrival process $\zeta_d$, content $i$ also has a Poisson arrival whose arrival rate and mean inter-arrival time are $\zeta_i=q_i(1-e^{-b_ip\bar{n}})\zeta$ and $\frac{1}{\zeta_i}$, respectively. Hence, the \ac{D2D} aggregate arrival rate is given by $\zeta_d=\sum_{i=1}^{N_f} \zeta_i= \zeta\sum_{i=1}^{N_f}q_i(1-e^{-b_ip\bar{n}})$. 

The traffic intensity of a queue is defined as the ratio of mean service time to mean inter-arrival time. Let $\rho_d$ be the intensity of traffic served via \ac{D2D} communications, where 
\begin{align}
\rho_d = \sum_{i=1}^{N_f}\frac{1/ \mu_i}{1/\zeta_i} = \sum_{i=1}^{N_f}\frac{\zeta_i}{\mu_i} = \frac{\bar{S}}{W_d {\rm log}_2(1+\theta)}\sum_{i=1}^{N_f}\frac{\zeta_i}{\Upsilon_i}.
 \end{align}
  The stability condition then requires  that $\rho_d < 1$, otherwise, the overall delay will be infinite. In addition, the mean queue size for an \emph{\ac{MPSQ}} with traffic intensity $\rho_d$ can be approximated by $L_d =  \frac{\rho_d}{1 - \rho_d}$ \cite{collings1974queueing}. From \cite{Kleinrock}, the average service delay is the ratio of mean queue size $L_d$ over the arrival rate $\zeta_d$. Hence, we can express the \ac{D2D} service delay as:  
\begin{align}
T_d &=\frac{L_d}{\zeta_d} =  \frac{1}{\zeta_d} \frac{\rho_d}{1-\rho_d}  =  \frac{1}{\zeta_d} \frac{1}{\frac{1}{\rho_d}-1} 
= \frac{1}{\zeta_d}   \frac{1}{ \frac{W_d {\rm log}_2(1+\theta)}{\bar{S}\sum_{i=1}^{N_f}\frac{\zeta_i}{\Upsilon_i}} -1}
 \nonumber \\ 		
 \label{d2d-del}
 &= \frac{\zeta}{\zeta_d}   
 \Bigg(\frac{ \sum_{i=1}^{N_f}\frac{q_i(1-e^{-b_ip\bar{n}})}{\Upsilon_i}}{\frac{W_d}{\bar{S}}{\rm log}_2(1+\theta) - \zeta \sum_{i=1}^{N_f}\frac{q_i(1-e^{-b_ip\bar{n}})}{\Upsilon_i}}\Bigg).
 \end{align}
 
 \begin{remark}
From (\ref{d2d-del}), it readily follows that the \ac{D2D} service delay $T_d$ decreases as the rate coverage probability $\Upsilon_i$ increases. Accordingly, the average service delay $T$ will also improve with the improvement  of the rate coverage probability. The rate coverage probability $\Upsilon_i$ increases as either the density of clusters or the distance between devices decrease, which correspond to lower interference and higher signal levels, respectively. This  characteristic is revisited in Section \ref{num-result} when we study the effect of network geometry on the average service delay.
  \end{remark}

\subsection{BS-to-device Service Rate and Service Delay} 
Given the \ac{BS} coverage probability $\Upsilon_b$, the \ac{BS}-to-device average rate will be: $C_b =  W_b{\rm log}_2(1+\theta)\Upsilon_b$, and correspondingly, the service rate is $\mu_b=\frac{C_b}{\bar{S}} = \frac{W_b}{\overline{S}}{\rm log}_2(1+\theta)\Upsilon_b$. 
%
To characterize the average service delay, we must calculate the average number of clients attached to the same \ac{BS} and specify the scheduling policy. Under the assumption of one content client per cluster, the locations of clients form a \ac{PPP} with the same intensity $\lambda_p$ as the cluster centers. This is a direct result from the displacement theory of the \ac{PPP} \cite{haenggi2012stochastic}, where the clients can be considered as the cluster centers that are independently displaced from their locations. Hence, the average number of clients per \ac{BS} is $\eta=\frac{\lambda_p}{\lambda_b}$ \cite{6287527}. Moreover, we assume a \ac{FIFO} scheduling where each \ac{BS}  schedules the earliest arriving request in each time slot. Hence, all queues at a \ac{BS} could be considered as a single large queue whose arrival rate is $\eta \zeta_b$, see Fig.~\ref{distance-delay-queue}. The average service delay of \ac{BS}-served requests of the typical device equals the average service delay of the requests arriving at this large queue. 

The probability that more than two requests arrive simultaneously at the large queue is very small \cite{zhong2017heterogeneous}. Therefore, the arrival process at the large queue can be considered as a Poisson process whose aggregate arrival rate is $\eta\zeta_b$. Given the exponential distribution of content size, the traffic model at each BS is equivalent to a M/M/1 queueing system whose  mean queue size, arrival rate, and service rate are, respectively, $L_b$, $\eta\zeta_b$, and $\mu_b$. The traffic intensity is hence $\rho_b = \frac{\eta\zeta_b}{\mu_b}$, and, correspondingly, the mean queue size will be $L_b =   \frac{\rho_b}{1 - \rho_b}=\frac{\eta\zeta_b}{\mu_b - \eta\zeta_b}$ \cite{Kleinrock}. The average service delay per request for Q$_b$ of the typical device can be obtained from 
\begin{align}
T_b &=  \frac{L_b}{\eta\zeta_b}=\frac{1}{\mu_b - \eta\zeta_b} 
= \frac{1}{\frac{W_b}{\overline{S}}{\rm log}_2(1+\theta)\Upsilon_b - \eta\zeta_b}. 
\end{align} 
%
 %


Having discussed, the service rates of Q$_d$ and Q$_b$ depend on the traffic of the network. This is attributed to the fact that requests being served are susceptible to interference only if there are concurrently-served requests in other clusters (or by other BSs). Next, we conduct the delay analysis for a dominant system as widely-adopted in the literature, see, e.g., \cite{zhong2016stability,zhong2017heterogeneous}, and \cite{zhong2017toward}. Particularly, the dominant network is a fictitious system that is identical to the original system, except that terminals may choose to transmit even when their respective buffers are empty, i.e., they transmit a dummy packet. If both original and dummy systems started from the same initial state and fed with the same arrivals, then, the queues in the fictitious dominant system can never be shorter than the queues in the original system. Hence, the obtained average service delay for the dominant system represents an upper bound on that of the original system. Hence, in the sequel, the obtained average service delay for the dominant system represents an upper bound on that of the original system.
\vspace{-0.3 cm}
\subsection{Average Service Delay Minimization}
Next, we first characterize the conditions under which the network sustains its stability, i.e., the average service delay is bounded almost surely. Then, we obtain and minimize the average service delay.
\begin{proposition} {\rm The original system is stable if Q$_d$ and Q$_b$ are stable, and the sufficient conditions of Q$_d$ and Q$_b$ stability are, respectively,}
\begin{align}
\label{stable1}
\sum_{i=1}^{N_f}\frac{\zeta_i}{\Upsilon_i} &< \frac{W_d {\rm log}_2(1+\theta)}{\bar{S}},
\\
\label{stable2}
\zeta_{b} &<\frac{W_b{\rm log}_2[1+\theta]\Upsilon_b}{\eta \overline{S}}. 
\end{align}
\end{proposition}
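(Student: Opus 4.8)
The plan is to split the claim into a structural part---reducing network stability to the stability of the two representative queues $Q_d$ and $Q_b$---and an analytical part that turns each per-queue stability requirement into the explicit bounds (\ref{stable1}) and (\ref{stable2}).

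First I would justify the reduction. By the stationarity of the parent \ac{PPP} $\Phi_p$ it suffices to examine the queues of the typical client. Since the \ac{D2D} and \ac{BS}-to-device communications occupy disjoint bands $W_d$ and $W_b$, the queues $Q_d$ and $Q_b$ of the typical client are non-interacting, so the typical client is stable exactly when both $Q_d$ and $Q_b$ are stable; translation invariance of $\Phi_p$ then extends this to every client in the network. This establishes that the original system is stable provided $Q_d$ and $Q_b$ are stable.

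Next I would derive the two bounds by imposing that each traffic intensity stays below unity. For the \ac{MPSQ} $Q_d$, the stability criterion is $\rho_d<1$ \cite{collings1974queueing}. Inserting the per-content service rate $\mu_i=\frac{W_d\log_2(1+\theta)\Upsilon_i}{\bar{S}}$ from (\ref{serv-rate}) into $\rho_d=\sum_{i=1}^{N_f}\zeta_i/\mu_i$ and clearing the common factor $\frac{\bar{S}}{W_d\log_2(1+\theta)}$ gives exactly (\ref{stable1}). The M/M/1 queue $Q_b$ is handled identically: $\rho_b=\eta\zeta_b/\mu_b<1$ with $\mu_b=\frac{W_b\log_2(1+\theta)\Upsilon_b}{\bar{S}}$ rearranges to (\ref{stable2}). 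These are routine algebraic rearrangements.

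The hard part is justifying that such fixed-rate queueing criteria may be applied at all, because the service rates $\mu_i$ and $\mu_b$ are \emph{not} constants: through $\Upsilon_i$ and $\Upsilon_b$ they depend on the instantaneous interference, which is itself governed by how many queues in neighboring clusters (or at other \acp{BS}) are simultaneously backlogged. To break this spatial coupling I would invoke the dominant-system construction already introduced in this section. In the dominant system every interferer transmits at all times (sending dummy packets when idle), so the aggregate interference is maximal and the coverage probabilities $\Upsilon_i,\Upsilon_b$ computed in Lemma \ref{intra-cluster} and Lemma \ref{inter-cluster} are their smallest possible values. Because the dominant-system queues are pathwise never shorter than the original ones, stability of the dominant system implies stability of the original system; hence the conditions derived using these worst-case (minimal) service rates are \emph{sufficient}, although generally not necessary. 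The delicate point is confirming that the activity model underlying the $\Upsilon_l$ expressions already matches this all-interferers-active dominant system, so that (\ref{stable1}) and (\ref{stable2}) are genuine upper-bounding stability guarantees.
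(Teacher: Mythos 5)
Your proposal is correct and follows essentially the same route as the paper: both arguments identify conditions (\ref{stable1}) and (\ref{stable2}) with the traffic-intensity requirements $\rho_d<1$ and $\rho_b<1$, and both invoke the dominant system (all interferers always transmitting, so that the service rates are decoupled from the queue dynamics and are worst-case) together with pathwise dominance to conclude that stability of the dominant queues implies stability of the original system. The only cosmetic difference is that the paper closes the argument by citing Loynes' theorem to assert $\lim_{t\rightarrow\infty}Q_i(t)<\infty$ almost surely, whereas you appeal to the standard $\rho<1$ stability criteria for the MPSQ and M/M/1 queues.
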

\begin{proof}
Equations (\ref{stable1}) and (\ref{stable2}) implies, respectively,  that $\rho_d$ and $\rho_b$ are less than one, which guarantee the stability of Q$_d$ and Q$_b$ in the dominant system where all queues that have empty buffers make dummy transmissions. By Loynes' theorem \cite{loynes1962stability}, it follows that lim$_{t\rightarrow \infty}Q_i(t)< \infty$, $i\in\{d,b\}$, (almost surely) for all queues in the dominant network. 
\end{proof}

The weighted average service delay $T$ will be then expressed as: 
\begin{align}
T &=  \frac{\zeta_{d}T_d + \zeta_{b}T_b}{\zeta}
\nonumber \\ 
& 
=  
 \frac{ \sum_{i=1}^{N_f}\frac{q_i(1-e^{-b_ip\bar{n}})}{\Upsilon_i}}{\frac{W_d}{\bar{S}}{\rm log}_2(1+\theta) - \zeta \sum_{i=1}^{N_f}\frac{q_i(1-e^{-b_ip\bar{n}})}{\Upsilon_i}} 
   \nonumber  \\ 
   &  \quad  + \frac{\zeta_{b}}{\zeta} 
\frac{1}{\frac{W_b}{\overline{S}}{\rm log}_2(1+\theta)\Upsilon_b - \eta\zeta_b}
\nonumber \\ 
& = \frac{ \sum_{i=1}^{N_f}\frac{q_i(1-e^{-b_ip\bar{n}})}{\Upsilon_i}}{\frac{W_d}{\bar{S}}{\rm log}_2(1+\theta) - \zeta \sum_{i=1}^{N_f}\frac{q_i(1-e^{-b_ip\bar{n}})}{\Upsilon_i}}  
  \nonumber \\ 
  &
  \label{delay-T-eqn}
 \quad +  \frac{\sum_{i=1}^{N_f} q_ie^{-b_ip\bar{n}} }{\frac{W-W_d}{\overline{S}}{\rm log}_2(1+\theta)\Upsilon_b - \eta \zeta\sum_{i=1}^{N_f} q_ie^{-b_ip\bar{n}}}.
\end{align} 

Next, we minimize the average per request service delay $T$ by optimizing the bandwidth allocation and caching probability $\boldsymbol{b}$. It should be readily apparent that $\boldsymbol{b}$ determines the arrival rates $\zeta_{d}$ and $\zeta_{b}$ while the service rates $\mu_d$ and $\mu_b$ heavily depend on the  allocated bandwidth. We hence formulate the delay joint caching and bandwidth allocation minimization problem as 
\begin{align}
\label{optimize_eqn3}
\textbf{P1:} \quad \quad&\underset{\boldsymbol{b},W_d}{\text{min}} \quad T(\boldsymbol{b},W_d) \\
\label{const110}
&\textrm{s.t.}\quad \sum_{i=1}^{N_f}b_i = M, \quad \quad b_i \in [0,1] \quad \forall i \in \mathcal{F}, \\				
\label{const111}
&  0 \leq W_d \leq W,    \\
\label{stab1}
&\zeta\sum_{i=1}^{N_f}\frac{q_i(1-e^{-b_ip\bar{n}})}{\mu_i} <1,
\\
\label{stab2}
&\eta \zeta \sum_{i=1}^{N_f} q_ie^{-b_ip\bar{n}} < \mu_b, 
\end{align}
where constraints (\ref{stab1}) and (\ref{stab2}) are the stability conditions for Q$_d$ and Q$_b$, respectively. Next, we first show that $T(\boldsymbol{b},W_d)$ is a convex function of $W_d$ for a given caching probability $\boldsymbol{b}$. 
\begin{lemma}
\label{lemma-optimal-w-1}
For fixed $\boldsymbol{b}$, the objective function of \textbf{P1} in (\ref{optimize_eqn3}) is convex \ac{w.r.t.} $W_d$, and the optimal bandwidth allocation is given in a closed-form expression as 
\begin{align}
\label{optimal-w-1}
 W_d^* &= \frac{\zeta A + \sqrt{\frac{A}{B \Upsilon_b}}\big(WC\Upsilon_b - \eta\zeta B \big)}{\big(C +  C\Upsilon_b\sqrt{\frac{A}{B \Upsilon_b}}\big)}, 
 %
 %
 %
%
\end{align}
where $A=\sum_{i=1}^{N_f}\frac{q_i(1-e^{-b_ip\bar{n}})}{\Upsilon_i}$, $B=\sum_{i=1}^{N_f} q_ie^{-b_ip\bar{n}}$, and $C =\frac{{\rm log}_2(1+\theta)}{\bar{S}}$.
\end{lemma}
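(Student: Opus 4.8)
The plan is to first reduce the bivariate objective $T(\boldsymbol{b},W_d)$ in (\ref{delay-T-eqn}) to a single-variable function of $W_d$ for fixed $\boldsymbol{b}$, writing it compactly with the shorthands $A$, $B$, $C$ as $T(W_d) = \frac{A}{C W_d - \zeta A} + \frac{B}{C\Upsilon_b(W - W_d) - \eta\zeta B}$. The first thing I would record — and the hinge of the entire argument — is that on the feasible set of \textbf{P1} the two stability constraints (\ref{stab1}) and (\ref{stab2}) are \emph{exactly} the statements $C W_d - \zeta A > 0$ and $C\Upsilon_b(W - W_d) - \eta\zeta B > 0$. Hence both denominators are strictly positive throughout the admissible interval, which is what makes every subsequent step legitimate.

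For convexity I would differentiate termwise. Each summand has the form $c/\ell(W_d)$ with a positive constant $c$ and an affine $\ell$; two differentiations give $2c(\ell')^2/\ell^3$. Because $\ell > 0$ on the feasible set by the positivity just noted, both contributions are strictly positive, so $T''(W_d) = \frac{2AC^2}{(C W_d - \zeta A)^3} + \frac{2BC^2\Upsilon_b^2}{(C\Upsilon_b(W - W_d) - \eta\zeta B)^3} > 0$. This establishes that $T$ is strictly convex in $W_d$, so any stationary point is the unique global minimizer.

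To locate that point I would set $T'(W_d) = 0$, which reads $\frac{AC}{(C W_d - \zeta A)^2} = \frac{BC\Upsilon_b}{(C\Upsilon_b(W - W_d) - \eta\zeta B)^2}$. Although this is nominally quadratic in $W_d$, the key simplification — the step I expect to carry the most weight — is that positivity of both denominators lets me take positive square roots of both sides, collapsing the relation to the \emph{linear} equation $\sqrt{B\Upsilon_b}\,(C W_d - \zeta A) = \sqrt{A}\,(C\Upsilon_b(W - W_d) - \eta\zeta B)$. The companion root from the negative square root is extraneous, since it would force one denominator to be nonpositive and thereby violate feasibility. Collecting the $W_d$ terms and dividing through then yields precisely the closed form (\ref{optimal-w-1}).

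Finally I would note that strict convexity certifies this stationary $W_d^*$ as the global minimizer, and a short check that it lies inside the stability interval $\left(\,\zeta A / C,\; W - \eta\zeta B/(C\Upsilon_b)\,\right)$ (which is itself contained in $[0,W]$, so (\ref{const111}) is automatically respected) confirms it is the constrained optimum; were it to fall outside, convexity would instead place the optimum at the nearest feasible endpoint. The only genuinely delicate point is the square-root reduction, and it is handled entirely by the feasibility-driven positivity observation made at the outset.
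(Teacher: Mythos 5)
Your proposal follows essentially the same route as the paper's proof: rewrite the objective as $A(CW_d-\zeta A)^{-1} + B\big(C\Upsilon_b(W-W_d)-\eta\zeta B\big)^{-1}$, establish convexity from the sign of the second derivative using the stability-condition positivity of both denominators, and solve the first-order condition by taking positive square roots to obtain (\ref{optimal-w-1}). Your added remarks on discarding the extraneous negative root and checking that $W_d^*$ lies in the stability interval are slight refinements of, but not departures from, the paper's argument.
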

\begin{proof}
The sketch of the proof is found in Appendix \ref{BW-allocate}.
\end{proof}
\begin{figure}  [!t]
\vspace{-0.3 cm}
\centering
{ \hspace*{-0.0 in}
  \subfigure[Optimal bandwidth $\frac{W_d^*}{W}$ versus the popularity exponent]{\includegraphics[width=3.2in]{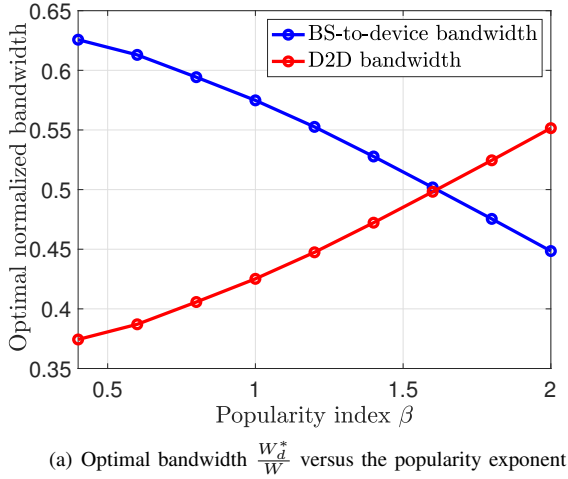}			
\label{BW_compare2}}
  }
 \subfigure[Optimal bandwidth $\frac{W_d^*}{W}$ versus the arrival rate $\zeta$]{\includegraphics[width=3.2in]{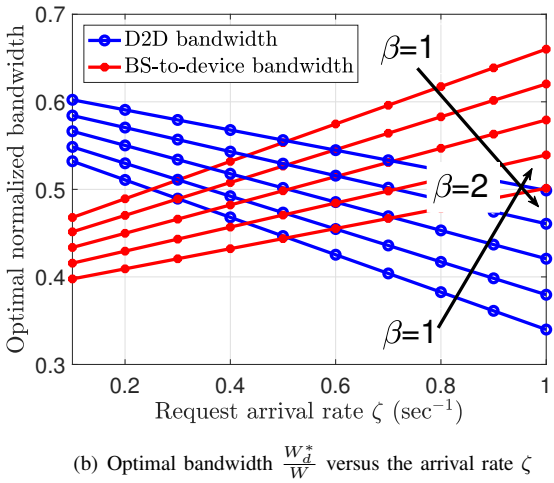}		
  \label{BW_compare22}}
\caption{The effect of various system parameters on the optimal bandwidth allocation $W_d^*$.}
\label{BW-analysis}
\vspace{-0.4 cm}
\end{figure}				
%
Before proceeding with the solution of \textbf{P1}, we first provide key insights on the performance of our proposed joint caching and bandwidth allocation scheme based on (\ref{optimal-w-1}). Firstly, from the first term in the numerator of (\ref{optimal-w-1}), it readily follows that the D2D optimal allocated bandwidth $W_d^*$ increases with the arrival rate of D2D communication-served requests. This is an interesting result since the system tends to cope up with such an increase in these \ac{D2D}-served requests by allocating more bandwidth to the \ac{D2D} communications. This behavior is shown in Fig.~\ref{BW-analysis} where we compare the bandwidth allocated to \ac{D2D} and base-to-device communications. In Fig.~\ref{BW_compare2}, we plot the optimal bandwidth $\frac{W_d^*}{W}$ versus the popularity exponent $\beta$. Fig.~\ref{BW_compare2} shows that the optimal allocated bandwidth $\frac{W_d^*}{W}$ monotonically increases with $\beta$, which can be interpreted as follows.  When $\beta$ increases, a smaller number of contents become highly demanded. These contents can be entirely cached and shared among the cluster devices. To cope with such a larger number of requests served via the \ac{D2D} communication, i.e., higher $\zeta_d$, the bandwidth $\frac{W_d^*}{W}$ needs to be increased. Fig.~\ref{BW_compare22} illustrates the effect of arrival rate $\zeta$ on the allocated bandwidth $W_d^*$. It is shown that $W_d^*$ tends to decrease as $\zeta$ increases, which can be interpreted as follows. Given that the aggregate arrival rate at each BS is $\eta\zeta_b$, when $\zeta$ increases, the average service delay can be minimized by balancing the loads and capacities of each communication mode. This can be attained by allocating more bandwidth to the \ac{BS}-to-device communication mode to cope up with the increasing load. \emph{This result reveals that it is vital to have joint load-aware resource allocation and content caching schemes to optimize the overall network performance.} 

Although the objective function of \textbf{P1} is convex \ac{w.r.t.} $W_d$, the coupling of the optimization variables $\boldsymbol{b}$ and $W_d$ makes \textbf{P1} a non-convex optimization problem. Therefore, \textbf{P1} cannot be solved directly using standard optimization methods. By applying the \ac{BCD} optimization technique, \textbf{P1} can be solved in an iterative manner similar to \cite{amer2022optimizing} as follows. First, for a given caching probability $\boldsymbol{b}$, we calculate $W_d^*$ from (\ref{optimal-w-1}). Afterwards, the obtained $W_d^*$ is used to update $\boldsymbol{b}$. Given $W_d^*$ from the bandwidth allocation subproblem, the caching probability subproblem can be formulated as
\begin{align}
\textbf{P2:} \quad \quad&\underset{\boldsymbol{b}}{\text{min}} \quad T(\boldsymbol{b},W_d^*) \\
&\textrm{s.t.}\quad  (\ref{const110}),  (\ref{stab1}), (\ref{stab2}).  \nonumber 
\end{align}

The caching probability subproblem \textbf{P2} is a sum of two fractional functions, where the first fraction is a prohibitively complex expression of $\boldsymbol{b}$, since $\Upsilon_i$ is an involved function of $\boldsymbol{b}$. This renders obtaining the optimal caching probability for \textbf{P2} very difficult. Instead, we seek a heuristic yet efficient algorithm to solve for a suboptimal caching solution for \textbf{P2}. Particularly, we use the interior point method, implemented in Mathematica, to obtain a suboptimal caching probability $\boldsymbol{b}$, as done in \cite{cache_schedule}. The entire proposed algorithm to solve \textbf{P1} is presented in Algorithm 1 and works as follows.\footnote{In the initialization of Algorithm 1, $T_0$ is first set to a large value, then, updated periodically based on the new $\boldsymbol{b}$ and $W_d$.} Firstly, we start with an initial caching probability $\boldsymbol{b}_0$ and allocated bandwidth $W_d=\frac{W}{2}$ to obtain a suboptimal caching solution based on the interior point method. Then, the obtained caching probability $\boldsymbol{b}$  is used to update the bandwidth allocation in (\ref{optimal-w-1}). The explained procedure, i.e., solving the two subproblems iteratively, is repeated until the value of \textbf{P1}'s objective function converges to a pre-specified accuracy. Importantly, the caching probability solution $\boldsymbol{b}$, given the optimal bandwidth $W_d^*$, depends on the initial value input to the interior point algorithm \cite{boyd2004convex}. We use the  Zipf caching as an initial point for this algorithm to obtain a suboptimal caching probability given the bandwidth calculated from (\ref{optimal-w-1}). 

\begin{algorithm}[!t]
    \SetKwInOut{Input}{Input}
    \SetKwInOut{Output}{Output}

    \Input{$W$, $N_f$, $M$, $\Upsilon_i$, $\Upsilon_b$, $\beta$, $\overline{S}$, $\theta$, $p$, $T_0$;}
    \textbf{Initialization}: {$\boldsymbol{b} \gets \boldsymbol{b}_0$, $W_{d} \gets \frac{W}{2}$\;
    ($T$) $\gets \text{\rm Eq. (\ref{delay-T-eqn})} \gets (\boldsymbol{b}_0,W_{d})$}\;
       \While{$T<T_0$}		
      {
      \tcc{Update $T_0$ with the calculated delay.}
      $T_0=T$\;
        ($\boldsymbol{b}$) $\gets \text{\rm interior point method}(\boldsymbol{b},W_d)$\;
        ($W_d$) $\gets \text{\rm Eq. (\ref{optimal-w-1})} \gets (\boldsymbol{b})$\;
        ($T$) $\gets \text{\rm Eq. (\ref{delay-T-eqn})} \gets (\boldsymbol{b},W_d)$\;

      }
      \Output{$\boldsymbol{b}$, $W_d$;}
 \caption{BCD algorithm for \textbf{P1}}
\end{algorithm}

%
%
%
%
%
%
%

\vspace{-0.0 cm}
\section{Numerical Results}
\label{num-result}
\begin{table}[!t]		
\vspace{-0.3 cm}
\caption{Simulation Parameters} 
\centering 
\scalebox{0.9}{
\begin{tabular}{|c| c| c| c| c | c|} 
\hline\hline 
Description & Parameter & Value\\ [0.5ex] 
\hline 
System bandwidth & $W$ & \SI{20}{\mega\hertz}\\ 
$\sir$ threshold&$\theta$&\SI{0}{\deci\bel}\\ 
Popularity exponent&$\beta$&0.5\\
Library size&$N_f$&10\\
Cache size per device&$M$&1\\
Path loss exponent&$\alpha$&4\\
Average number of devices per cluster&$\bar{n}$&20\\
Density of clusters&$\lambda_{p}$&$\SI{50}{km^{-2}}$\\
Total request arrival rate&$\zeta$&\SI{0.5}{request/sec}\\
 Average content size&$\overline{S}$&\SI{5}{Mbits} \\
Displacement standard deviation& $\sigma$ & \SI{10}{\metre}\\
 Density of BSs&$\lambda_{b}$& $\SI{10}{km^{-2}}$\\
\hline 
\end{tabular}}
\label{ch3:table:sim-parameter} 
\vspace{-0.5 cm}
\end{table}

At first, we validate the developed mathematical model via Monte Carlo simulations. Then we benchmark the proposed caching scheme against conventional caching schemes. Unless otherwise stated, the network parameters are selected as shown in Table \ref{ch3:table:sim-parameter}.\footnote{In Table \ref{ch3:table:sim-parameter}, we assume a unit cache size per device and a file library of relatively small size. These values, which are close to that used in \cite{8374852} and \cite{7995138}, are reasonable in the study of communication and caching aspects of D2D content delivery networks. Other works in the literature, e.g., \cite{6787081}, considered a much larger size of file library, however, their objective was to conduct the scaling analysis of caching networks.}

\begin{figure}[!tbh]
\vspace{-0.0 cm}
	\begin{center}
		\includegraphics[width=3.2in]{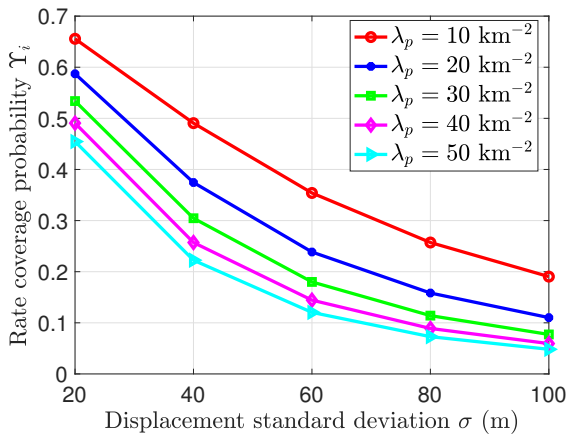}	
		\caption{The rate coverage probability $\Upsilon_i$ is plotted versus the network spatial parameters $\lambda_p$ and $\sigma$ ($b_i=1, p=0.2$).}
		\label{cache_size}
	\end{center}
\vspace{-0.3 cm}
\end{figure}

\begin{figure}  [!t]				
\centering
{ \hspace*{-0.0 in}
  \subfigure[Rate coverage probability $\Upsilon_i$ for different point processes, namely, \ac{TCP}, \ac{MCP}, and \ac{PPP}]{\includegraphics[width=3.2in]{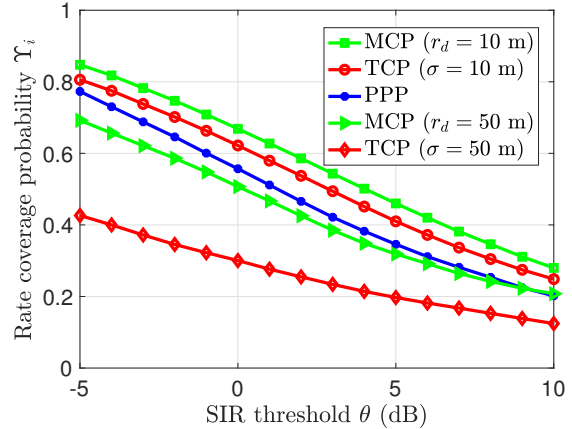}			
\label{rate-cov-analysis-ppp}}
}
\subfigure[Rate coverage probability $\Upsilon_i$ under different channel fading and different selections of the content provider]{\includegraphics[width=3.2in]{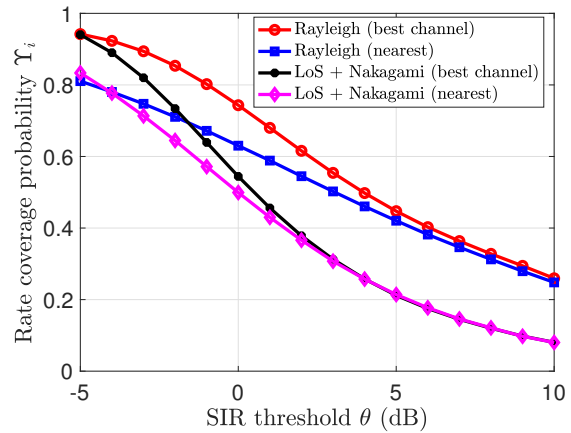}		
  \label{rate-cov-analysis-naka}}
\caption{Effects of the network topology and small-scale and  large-scale fading on the achievable rate coverage probability $\Upsilon_i$.}
\label{rate-cov-analysis-ppp-naka}
\vspace{-0.5cm}
\end{figure}	

\vspace{-0.4cm}
\subsection{\ac{D2D} Rate Coverage Probability} 
In Fig.~\ref{cache_size}, we plot the \ac{D2D} rate coverage probability $\Upsilon_i$ versus the displacement standard deviation $\sigma$ at different clusters' density $\lambda_p$. Fig.~\ref{cache_size} shows that $\Upsilon_i$ monotonically decreases with both $\sigma$ and $\lambda_p$. This is attributed to the fact that the  increase of $\sigma$ and $\lambda_p$ results in larger serving distance and shorter interfering distance, i.e., lower signal power due to the large-scale fading, and higher interference power received by the typical device, respectively. This explains the encountered degradation of $\Upsilon_i$ with $\sigma$ and $\lambda_p$. 

In Fig.~\ref{rate-cov-analysis-ppp-naka}, we investigate the effects of the network topology and small-scale fading parameters on the achievable performance. Firstly, Fig.~\ref{rate-cov-analysis-ppp} compares the achievable  rate coverage probability $\Upsilon_i$ of our proposed \ac{TCP} network to the \ac{PPP} and \ac{MCP} counterparts, which are assumed to operate as follows. For a fair comparison, we first set the \ac{PPP} intensity to be the same as the \ac{TCP}'s one of $\overline{n}\lambda_p$. Recall that $\overline{n}$ is the average number of devices per cluster and $\lambda_p$ is the density of clusters. Moreover, for the \ac{MCP}, the parent points are drawn from the same \ac{PPP} $\Phi_p$ as the \ac{TCP}. However, the cluster devices are  uniformly distributed in a ball of radius $r_d$ centered at the cluster center for the \ac{MCP} (with $r_d$ set similar to the standard deviation $\sigma$ for the \ac{TCP}). Fig.~8(a) first shows that depending on the distance between devices within the same cluster (i.e., $\sigma$ for \ac{TCP} and $r_d$ for \ac{MCP}), the achievable rate coverage probability $\Upsilon_i$ of the \ac{TCP} or \ac{MCP} might outperform that of \ac{PPP} or not. In particular, it is shown that if the distances between devices within the same cluster are sufficiently small, i.e., the smaller is $\sigma$ or $r_d$, the higher is $\Upsilon_i$ for the \ac{TCP}, and vice versa. Moreover, the achievable rate coverage probability for \ac{MCP} is shown to surpass that of the \ac{TCP} for the different setups of $\sigma$ (or $r_d$).


Secondly, in Fig.~\ref{rate-cov-analysis-naka}, we investigate the effect of small scale fading and \ac{LoS} communications on the achievable performance of our proposed network. Particularly, we compare the performance of the proposed clustered \ac{D2D} cache-enabled network, under the assumption of \ac{NLoS} communication and Rayleigh fading channels, to that of a network undergoing \ac{LoS}-dominated intra-cluster communication. In other words, we consider a variant of the channel model under which, devices within the same cluster  have direct \ac{LoS} communications, while those in different clusters experience \ac{NLoS} communication and Rayleigh fading channels. The inter-cluster communication channels are the same as described in Section \ref{netw-model}, i.e., characterized by exponentially distributed channel power gains with unit power, and path loss whose exponent is $\alpha$. However, the intra-cluster channel model follows the \ac{LoS} transmission approach of \cite{ding2016performance}, which is described as follows. The intra-cluster channels are characterized by both large-scale and small-scale fading. The large-scale fading is also represented by the path loss law, however, with a smaller path loss exponent $\alpha_l$, i.e., $\alpha_l<\alpha$. Moreover, the large scale fading is accompanied with a Nakagami small-scale fading whose fading parameter is $m>1$. The intra-cluster channel parameters are indicated in Table \ref{ch3:table:los-parameter}, which are similar to the \ac{LoS} channel parameters of \cite{8998329} and \cite{8756296}. From Fig.~\ref{rate-cov-analysis-naka}, it is clear that the existence of Nakagami fading along with \ac{LoS} communications among cluster devices substantially deteriorates the achievable rate coverage probability. This can be interpreted to that while Nakagami and \ac{LoS} communications lead to an improved (desired) signal level at the typical device, compared to \ac{NLoS} communications and Rayleigh fading channels, this also yields a stronger interference power and, accordingly, poorer rate coverage probability. Fig.~\ref{rate-cov-analysis-naka} also evaluates the \ac{D2D} rate coverage probability under two system setups, namely, nearest and channel-based content provider selections. For the latter, the \ac{CSI} is assumed to be known at the devices. Intuitively, the rate coverage probability is shown to improve under the channel-based content provider selection. However, such a gain is on the expense of high signaling and handover overheads. 

\begin{table}[!h]		
\vspace{-0.1 cm}
\caption{\ac{LoS} Simulation Parameters} 
\centering 
\scalebox{1.1}{
\begin{tabular}{|c| c| c| } 
\hline 
Description & Parameter & Value\\ [0.5ex] 
\hline 
Path loss exponent&$\alpha_l$&2.09\\ 
Nakagami fading parameter&$m$&$3$\\ 
\hline 
\end{tabular}}
\label{ch3:table:los-parameter} 
\end{table}

\begin{figure}  [!t]
\vspace{-0.3 cm}
\centering
{ \hspace*{-0.0 in}
  \subfigure[Optimized average service delay versus the popularity index.]{\includegraphics[width=3.2in]{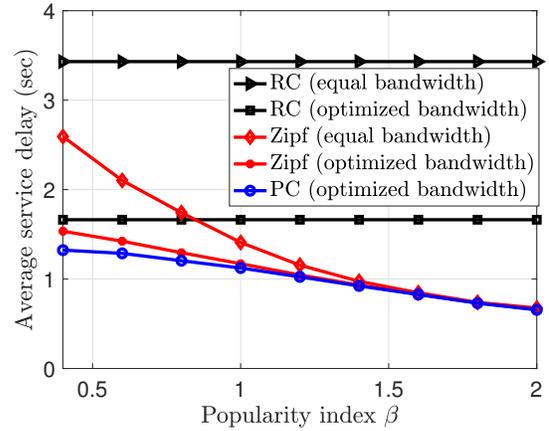}			
\label{delay_compare}}
}
{ \hspace*{+0.0 in}
\subfigure[Optimized average service delay $T$ versus the arrival rate $\zeta$.]{\includegraphics[width=3.2in]{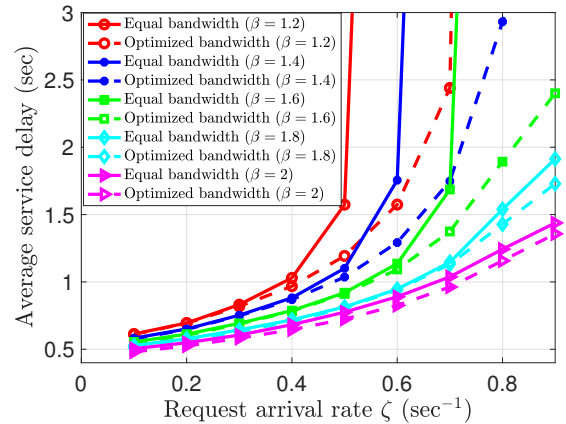}		
  \label{BW_compare}}
  }
\caption{Comparison of the proposed joint \ac{PC} and bandwidth allocation with other benchmark schemes ($\lambda_p=\SI{10}{km^{-2}}$, $\eta=5$).}
\label{delay-analysis}
\vspace{-0.5cm}
\end{figure}				

\vspace{-0.6 cm}
\subsection{Delay Results}		 
The performance of our proposed joint \ac{PC} and bandwidth allocation scheme is evaluated and compared with other benchmark schemes in Fig.~\ref{delay-analysis}. In particular, Fig.~\ref{delay_compare} compares the average service delay of our proposed scheme with \ac{RC} and Zipf caching schemes. For \ac{RC}, contents to be cached are uniformly chosen at random while for Zipf caching, the contents are chosen based on their popularity as in (\ref{zipf}). We first note that our proposed scheme significantly reduces the average service delay compared to the Zipf and \ac{RC} schemes under a  fixed bandwidth allocation (i.e., $W_d=W_b=W/2$). Particularly, for small $\beta$, the average service delay can be reduced by nearly $100\%$ ($350\%$) compared to  Zipf caching (\ac{RC}) under the fixed bandwidth allocation scheme. Moreover, under the optimized bandwidth allocation for \ac{RC} and Zipf, our proposed \ac{PC} still attains the best performance among all schemes. Meanwhile, except for the \ac{RC}, the average service delay monotonically decreases with the increase of $\beta$ since a smaller number of contents will undergo the highest demand.  
Fig.~\ref{BW_compare} manifests the effect of the request arrival rate $\zeta$ on the average service delay. We compare the performance of our proposed \ac{PC} under two schemes, namely, optimized and equally-divided  bandwidth allocation. Intuitively, the average service delay monotonically increases with $\zeta$ since larger request arrival rates result in a higher queuing delay for both Q$_d$ and Q$_b$. Moreover, we note that the system might become unstable, i.e., the average service delay goes to infinity, when $\zeta$ becomes considerably large. Interestingly, we see that the  proposed optimal bandwidth allocation scheme is more robust in terms of stability and can significantly reduce the average service delay. Particularly, the proposed scheme extends the stability region of the system in terms of request arrivals when compared to other schemes.

\begin{figure} [!t]	
\vspace{-0.1 cm}
\centering
{ \hspace*{-0.0 in}
\subfigure[Optimal bandwidth $\frac{W_d^*}{W}$ versus the displacement $\sigma$]{\includegraphics[width=3.2in]{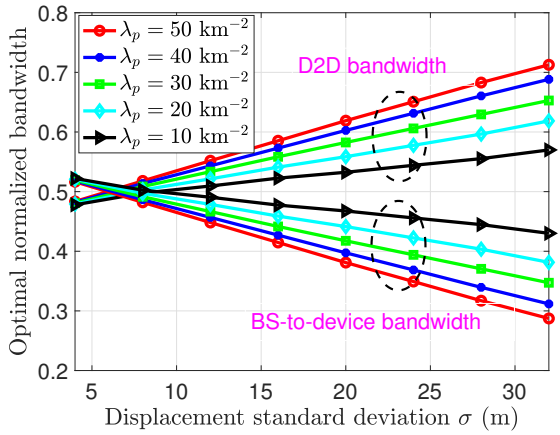}		 
  \label{optimal-w}}}
\subfigure[Optimized average service delay versus the displacement $\sigma$]{\includegraphics[width=3.2in]{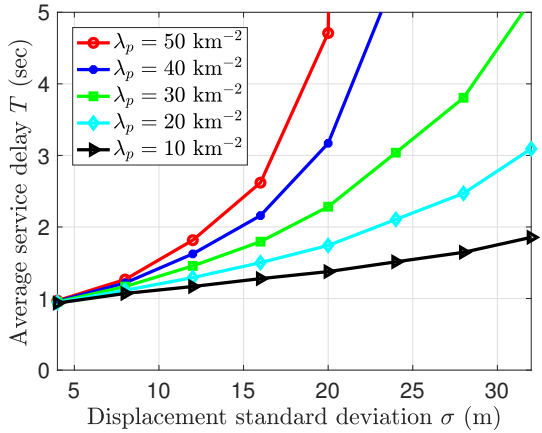}		 
  \label{av-delay}}
\caption{Effects of the network geometrical parameters on the average service delay and allocated bandwidth ($\beta=1.0$, $\eta=5$).} 
\vspace{-0.6cm}
\label{scaling}
\end{figure}

Fig.~\ref{scaling} shows the effect of the network geometry, e.g., clusters' density $\lambda_p$ and  displacement standard deviation $\sigma$, on the optimal allocated bandwidth $W_d^*$ and the average service delay. In Fig.~\ref{optimal-w}, we plot the optimal normalized bandwidth $\frac{W_d^*}{W}$ versus the displacement standard deviation $\sigma$ at different clusters' density $\lambda_p$. We first observe that the normalized optimal bandwidth tends to increase with both $\sigma$ and $\lambda_p$. This behavior can be understood in the light of (\ref{optimal-w-1}) as follows. Firstly, the rate coverage probabilities $\Upsilon_i$ decrease as $\sigma$ and $\lambda_p$ increase, as discussed above. Accordingly, the \ac{D2D} service rates $\mu_i=\frac{W_d{\rm log}_{2}(1+ \theta)\Upsilon_i}{\bar{S}}$, $i\in\mathcal{F}$, also decrease with the increase of $\sigma$ and $\lambda_p$. However, while the \ac{D2D} service rate $\mu_d$ tends to decrease with the decrease of $\Upsilon_i$, the optimal allocated bandwidth tends to increase to compensate for the service rate degradation, and eventually, minimizing the average service delay.

In Fig.~\ref{av-delay}, we plot the average service delay versus $\sigma$ at different clusters' density $\lambda_p$. We notice that the average service delay monotonically increases with  $\sigma$ and $\lambda_p$.  This can be interpreted similarly  to Fig.~\ref{cache_size} and Fig.~\ref{optimal-w} as follows. As $\sigma$ and $\lambda_p$ increase, the rate coverage probability $\Upsilon_i$ and, correspondingly, service rates $\mu_i$ will increase. This results in higher queuing and transmission delay in Q$_d$, and eventually, larger average service delay. 

 \begin{figure}[!t]
\vspace{-0.1 cm}
	\begin{center}
		\includegraphics[width=3.5in]{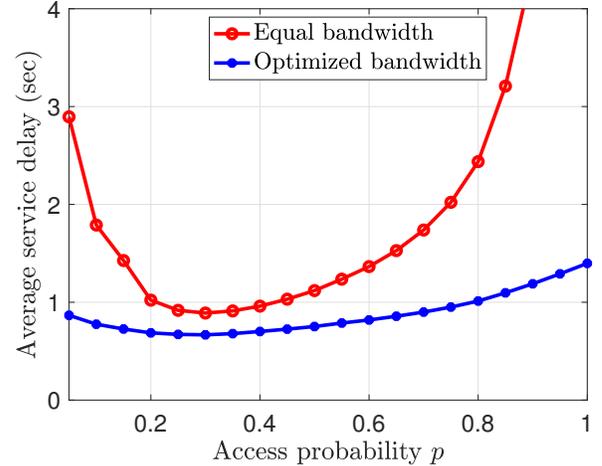}	
		\caption{The average service delay $T$ is plotted versus the access probability $p$ ($\theta=\SI{5}{dB}$, $\lambda_p=\SI{10}{km^{-2}}$, $\eta=5$).}
\label{delay_vs_p}
\end{center}
\vspace{-0.9cm}
\end{figure}

Finally, the effect of the access probability $p$ on the average service delay is investigated in Fig.~\ref{delay_vs_p}. The average service delay is plotted against $p$ for our proposed \ac{PC} under optimized and equally-divided  bandwidth allocation schemes. We can see that there exists an optimal $p$ that minimizes the average service delay. In essence, the value of $p$ yields an inherent tradeoff between the channel access opportunities and the effect of adverse interference. Clearly, our proposed bandwidth allocation attains the lowest average service delay and is seen to be more robust than the equally-divided  bandwidth scheme in terms of stability. Particularly, the proposed scheme extends the stability region of the system in terms of request arrivals when compared to other schemes. Having said, the channel access probability can be seen as an input to the system that reflects how individual devices will participate in the content delivery and sharing. 

 \vspace{-0.2 cm} 
\section{Conclusion}
In this paper, we have proposed a joint spectrum partitioning and content caching optimization framework that is leveraged to reduce the average service delay for clustered \ac{D2D} networks. Based on a developed spatiotemporal model, we have characterized the \ac{D2D} and \ac{BS}-to-device rate coverage probabilities and the  content arrival and service rates. We have then formulated the delay joint optimization problem whose decision variables are the content caching and bandwidth allocation. Employing a \ac{BCD} optimization technique, we have obtained the optimal allocated bandwidth in a closed-form expression, and a suboptimal caching scheme is also provided. Our results reveal that the joint optimization of spectrum partitioning and caching can effectively reduce the average service delay, e.g., by nearly $100\%$ and $350\%$, compared to the Zipf and uniform caching schemes under equal bandwidth allocations, respectively. Moreover, it is shown that the traffic load and content popularity play an important role in the design of resource allocation and content placement schemes.



\begin{appendices}
\section{Proof of Lemma \ref{near-dist}}
\label{near-dist-proof}

With reference to Fig.~\ref{distance-delay-queue}, the nearest distance $h_i$ is defined as the distance from the typical device at $(0,0)$ to its nearest active content provider belonging to the same cluster.  The \ac{PGF} of the number of active providers caching content $i$ within a ball $\textbf{b}(o, h_i)$ with radius $h_i$ and centered around the origin $o$ is
%
\begin{align}
& G_N(\vartheta)=  \Eb \left[ \vartheta^{\sum_{\boldsymbol{y}_{0i}\in\Phi_{ci}}\textbf{1}\{\lVert \boldsymbol{x}_0 + \boldsymbol{y}_{0i}\rVert < h_i\}}\right]   
\nonumber \\  
 &
 =  \Eb_{\Phi_c,\boldsymbol{x}_0} \prod_{\boldsymbol{y}_{0i}\in\Phi_{ci}} \left[ \vartheta^{\textbf{1}\{\lVert \boldsymbol{x}_0 + \boldsymbol{y}_{0i}\rVert < h_i\}}\right]    \nonumber \\
&\overset{(a)}{=}  \Eb_{\boldsymbol{x}_0} {\rm exp}\Big(-b_ip\overline{n}\int_{\mathbb{R}^2}(1 - \vartheta^{\textbf{1}\{\lVert \boldsymbol{x}_0 + \boldsymbol{y}_{0i}\rVert < h_i\}})f_{\boldsymbol{Y}_{0i}}(\boldsymbol{y}_{0i})d\boldsymbol{y}_{0i}\Big) \nonumber \\
&\overset{(b)}{=}  \Eb_{\boldsymbol{x}_0} {\rm exp}\Big(-b_ip\overline{n}\int_{\mathbb{R}^2}(1 - \vartheta^{\textbf{1}\{\lVert \boldsymbol{z}_0 \rVert < h_i\}})f_{\boldsymbol{Y}_{0i}}(\boldsymbol{z}_0-\boldsymbol{x}_0)d\boldsymbol{z}_0\Big), 				
 \end{align}
where $\textbf{1}\{.\}$ is the indicator function, and $\boldsymbol{x}_0\in \R^2$ is a \ac{RV} modeling the location of representative cluster's center relative to the origin $o$, with a realization $\boldsymbol{X}_0=\boldsymbol{x}_0$. (a) follows from the \ac{PGFL} of the \ac{PPP} $\Phi_{ci}$ along with its intensity function $\lambda_{ci}(y)=b_ip\overline{n} f_{\boldsymbol{Y}_{0i}}(\boldsymbol{y}_{0i})$, and (b) follows from change of variables $\boldsymbol{z}_0 = \boldsymbol{x}_0+\boldsymbol{y}_{0i}$. By converting Cartesian coordinates to polar coordinates with $h= \lVert\boldsymbol{x}_0+\boldsymbol{y}_{0i}\rVert = \lVert \boldsymbol{z}_0 \rVert$, we get
\begin{align}
&G_N(\vartheta)=  \Eb_{V_0} e^{-b_ip\overline{n}\int_{h=0}^{\infty}(1 - \vartheta^{\textbf{1}\{h < h_i\}})f_{H|V_0}(h|v_0)\dd{h}} 
\nonumber \\
&
\overset{(c)}{=} \Eb_{V_0} e^{-b_ip\overline{n}\int_{h=0}^{h_i}(1 - \vartheta)f_{H|V_0}(h|v_0)\dd{h}} 
 \overset{(d)}{=}  
\nonumber \\
 &
\int_{v_0=0}^{\infty}f_{V_0}(v_0)  {\rm exp}\Big(-b_ip\overline{n}\int_{h=0}^{h_i}(1 - \vartheta)f_{H|V_0}(h|v_0)\dd{h}\Big)\dd{v_0},
 \end{align}
 where $V_0\in \R$ is a \ac{RV} modeling the distance from representative cluster's center to the origin $o$, with a realization $V_0=v_0=\lVert \boldsymbol{x}_0\rVert$. (c) follows from the definition of the indicator function 
$\textbf{1}\{h < h_i\}$, and (d) follows from unconditioing over $v_0$. 

Given the intensity $\lambda_{ci}(y)$ of the Gaussian \ac{PPP} $\Phi_{ci}$, following \cite{7792210}, we can obtain the \ac{CDF} of the distance to the nearest content provider from:
\begin{align}
\label{near-cdf}
F_{H_i}&(h_i) =1 - G_N(0)= 
\nonumber \\
&1 - \int_{v_0=0}^{\infty} f_{V_0}(v_0) {\rm exp}\Big(-b_ip\bar{n}\int_{0}^{h_i}f_{H|V_0}(h|v_0)\dd{r}\Big)\dd{v_0},		
\end{align} 		
where $f_{V_0}(v_0) =\mathrm{Rayleigh}(v_0,\sigma)$ is a Rayleigh \ac{PDF} of a scale parameter $\sigma$, which models the distance $V_0=\lVert\boldsymbol{\boldsymbol{x}_0}\rVert$ between the client  device at the origin and the cluster center at $\boldsymbol{x}_0$, see Fig.~\ref{distance-delay-queue}. Similarly, $f_{R|V_0}(r|v_0)=\mathrm{Rice} (r;v_0,\sigma)$ is a Rician \ac{PDF} of a shape parameter $\frac{v_0}{2\sigma^2}$, which  models the distance $R=\lVert\boldsymbol{x}_0+\boldsymbol{y}\rVert$ between an intra-cluster active device at $\boldsymbol{y}$ relative to the cluster center at $\boldsymbol{x}_0 \in \Phi_{p}$ and the origin $(0,0)$, conditioned on $V_0=\lVert\boldsymbol{x}_0\rVert=v_0$. 
Applying Leibniz integral rule, which states that:
\begin{align}
& \frac{\dd}{\dd{x}}\Big(\int_{a(x)}^{b(x)}f(x,t)\Big)=
 f(x,b(x)) \frac{\dd}{\dd{x}}b(x) - f(x,a(x)) \frac{\dd}{\dd{x}}a(x)
 \nonumber \\
  &+ \int_{a(x)}^{b(x)}\frac{\partial}{\partial x}f(x,t),
 \end{align}
we obtain the nearest distance \ac{PDF} as
\begin{align}
&f_{H_i}(h_i)= \frac{\partial }{\partial h_i} \big( F_{H_i}(h_i) \big)= 
 \nonumber \\
 & -\frac{\partial}{\partial h_i} 
 \Big( \int_{v_0=0}^{\infty} f_{V_0}(v_0) {\rm exp}\Big(-b_i\bar{n}\int_{0}^{h_i}f_{H|V_0}(h|v_0)\dd{h}\Big)\dd{v_0}
 \Big)
\nonumber \\
&=  -\int_{v_0=0}^{\infty} f_{V_0}(v_0) \frac{\partial}{\partial h_i} \times
\nonumber \\
&
 {\rm exp}\Big(-b_i\bar{n}\int_{0}^{h_i}f_{H|V_0}(h|v_0)\dd{h}\Big)\dd{v_0}
\nonumber 
 \end{align}
 \begin{align}
&=  \int_{v_0=0}^{\infty} f_{V_0}(v_0)  \Big(b_i\bar{n}\frac{\partial}{\partial h_i}\int_{0}^{h_i}f_{H|V_0}(h|v_0)\dd{r}\Big) \times 
\nonumber \\
& {\rm exp}\Big(-b_i\bar{n}\int_{0}^{h_i}f_{H|V_0}(h|v_0)\dd{h}\Big)\dd{v_0}
\nonumber \\
\label{Leibniz}
&= b_ip\bar{n} \int_{v_0=0}^{\infty} f_{V_0}(v_0) f_{H_i|V_0}(h_i|v_0) \times 
\nonumber \\
&
{\rm exp}\Big(-b_ip\bar{n}\int_{0}^{h_i}f_{H|V_0}(h|v_0)\dd{h}\Big)\dd{v_0}.
 \end{align}

\section {Proof of lemma \ref{intra-cluster}}
\label{app-LT_intra}
Under the proposed channel access scheme, the set $\mathcal{A}_1^p$ forms a Gaussian \ac{PPP} of intensity $p\lambda_{c}(y)$, while $\mathcal{A}_2^p$ forms a Gaussian \ac{PPP} of intensity $p\bar{b_i}\lambda_{c}(y)$, where $b_i=1-\bar{b_i}$. With this in mind, Laplace transform of the intra-cluster interference $I_{\Phi_c}$ is obtained as follows: given the distance $v_0$ from the cluster center to the origin (see Fig.~\ref{distance-delay-queue}), we have $\mathscr{L}_{I_{\Phi_c} }(s|v_0)=$  
\begin{align}
&\mathbb{E} \Bigg[e^{-s \Big(\sum_{ \boldsymbol{y} \in \mathcal{A}_1^p}  g_{\boldsymbol{y}_{\boldsymbol{x}_0}}  \lVert \boldsymbol{x}_0  + \boldsymbol{y}\rVert^{-\alpha} + \sum_{ \boldsymbol{y} \in \mathcal{A}_2^p}  g_{\boldsymbol{y}_{\boldsymbol{x}_0}}  \lVert \boldsymbol{x}_0  + \boldsymbol{y}\rVert^{-\alpha} \Big)} \Bigg] \nonumber 
 \\ 
   &=  \mathbb{E}_{\Phi_{ci},g_{\boldsymbol{y}_{\boldsymbol{x}_0}}} 
   \prod_{ \boldsymbol{y} \in\mathcal{A}_1^p}  e^{-s g_{\boldsymbol{y}_{\boldsymbol{x}_0}}   \lVert \boldsymbol{x}_0  + \boldsymbol{y}\rVert^{-\alpha}}  
      \prod_{ \boldsymbol{y} \in\mathcal{A}_2^p}  e^{-s g_{\boldsymbol{y}_{\boldsymbol{x}_0}}   \lVert \boldsymbol{x}_0  + \boldsymbol{y}\rVert^{-\alpha}}  
      \nonumber \\
&=  \mathbb{E}_{\Phi_{ci}} \prod_{ \boldsymbol{y} \in\mathcal{A}^p_1}  \mathbb{E}_{g_{\boldsymbol{y}_{\boldsymbol{x}_0}}} e^{-s g_{\boldsymbol{y}_{\boldsymbol{x}_0}}   \lVert \boldsymbol{x}_0  + \boldsymbol{y}\rVert^{-\alpha}}  \times     
\nonumber  \\
&
\quad\quad\quad  \quad \quad \quad  \quad \quad \quad \prod_{ \boldsymbol{y} \in\mathcal{A}_2^p} \mathbb{E}_{g_{\boldsymbol{y}_{\boldsymbol{x}_0}}} e^{-s g_{\boldsymbol{y}_{\boldsymbol{x}_0}}   \lVert \boldsymbol{x}_0  + \boldsymbol{y}\rVert^{-\alpha}} 
\nonumber  \\
    &\overset{(a)}{=} \mathbb{E}_{\Phi_{ci}} \prod_{ \boldsymbol{y} \in\mathcal{A}_1^p} \frac{1}{1+s \lVert \boldsymbol{x}_0  + \boldsymbol{y}\rVert^{-\alpha}} 
  \mathbb{E}_{\Phi_{ci}} \prod_{ \boldsymbol{y} \in\mathcal{A}_2^p} \frac{1}{1+s \lVert \boldsymbol{x}_0  + \boldsymbol{y}\rVert^{-\alpha}}
        \nonumber \\
 & \overset{(b)}{=}  e^{-p\bar{n} \int_{\boldsymbol{y}\in\mathcal{A}_1^p}\big(1 - \frac{1}{1+s \lVert \boldsymbol{x}_0  + \boldsymbol{y}\rVert^{-\alpha}}\big)f_{\boldsymbol{Y}}(\boldsymbol{y})\dd{\boldsymbol{y}}} \times 
  \nonumber	
     \\
          &
  \quad \quad e^{-p\bar{b_i}\bar{n} \int_{\boldsymbol{y}\in\mathcal{A}_2^p}\big(1 - \frac{1}{1+s \lVert \boldsymbol{x}_0  + \boldsymbol{y}\rVert^{-\alpha}}\big)f_{\boldsymbol{Y}}(\boldsymbol{y})\dd{\boldsymbol{y}}},	 
     \nonumber	
\end{align}
where (a) follows from the Rayleigh fading assumption, and (b)  follows from the \ac{PGFL} of the Gaussian \ac{PPP} $\Phi_{ci}$ \cite{haenggi2012stochastic}. By changing the variables $\boldsymbol{z_0} = \boldsymbol{x}_0  + \boldsymbol{y}$ with $\dd \boldsymbol{z_0} = \dd{\boldsymbol{y}}$, we get  
\begin{align}
& \mathscr{L}_{I_{\Phi_c} }(s|v_0)  
        \overset{}{=} e^{-p\bar{n} \int_{\boldsymbol{y}\in\mathcal{A}_1^p}\big(1 - \frac{1}{1+s \lVert \boldsymbol{z_0}\rVert^{-\alpha}}\big)f_{\boldsymbol{Y}}(\boldsymbol{z_0}-\boldsymbol{x}_0)\dd{\boldsymbol{z_0}}}     \times 
\nonumber  \\      
 &
        e^{-p\bar{b_i}\bar{n} \int_{\boldsymbol{y}\in\mathcal{A}_2^p}\big(1 - \frac{1}{1+s \lVert \boldsymbol{z_0}\rVert^{-\alpha}}\big)f_{\boldsymbol{Y}}(\boldsymbol{z_0}-\boldsymbol{x}_0)\dd{\boldsymbol{z_0}}}
    \nonumber  \\      
 &\overset{(c)}{=} {\rm exp}\Big(-p\bar{n} \int_{r=h_i}^{\infty}\frac{sf_R(r|v_0)}{s+  r^{\alpha}}\dd{r}\Big) \times 
\nonumber  \\      
 &
  {\rm exp}\Big(-p \bar{b_i} \bar{n} \int_{r=0}^{h_i} \frac{sf_R(r|v_0)}{s+  r^{\alpha}}\dd{r}\Big)		
\nonumber 	
 \end{align}
where (c) follows from converting the cartesian coordinates to polar coordinates, with $r=\lVert \boldsymbol{z_0}\rVert$. To clarify how in (c) the normal distribution $f_{\boldsymbol{Y}}(\boldsymbol{z}_0-\boldsymbol{x}_0)$ is converted to the Rice distribution $f_R(r|v_0)$, recall first that the representative cluster is centered at $\boldsymbol{x}_0$, with a distance $v_0=\lVert \boldsymbol{x}_0\rVert$ from the origin. Further, intra-cluster interfering devices have their coordinates relative to $\boldsymbol{x}_0$ chosen independently from a Gaussian distribution of standard deviation $\sigma$. Then, by definition, the distance $r$ from an interfering device to the origin has a Rician \ac{PDF} $f_R(r|v_0)$ of a shape parameter $\frac{v_0}{2\sigma^2}$. Neglecting the correlation of the intra-cluster interfering distances, i.e., the common part $\boldsymbol{x}_0$ in $r=\lVert \boldsymbol{x}_0  + \boldsymbol{y}\rVert$, $ \boldsymbol{y} \in \mathcal{A}^p$, we can obtain a simple yet accurate expression of the Laplace transform as follows. Since both the client and interfering devices have their locations drawn from a Gaussian distribution of variance $\sigma^2$ relative to cluster center, then, every intra-cluster interfering distance has a Rayleigh \ac{PDF} of parameter $\sqrt{2}\sigma$, which yields 
\begin{align}
         \mathscr{L}_{I_{\Phi_c} }(s) &\approx  {\rm exp}\Big(-p\bar{n} \int_{r=h_i}^{\infty}\frac{s}{s+ r^{\alpha}}f_R(r)\dd{r}\Big) \times 
         \nonumber \\
& {\rm exp}\Big(-p\bar{b_i}\bar{n} \int_{r=0}^{h_i}\frac{s}{s+ r^{\alpha}}f_R(r)\dd{r}\Big),
\end{align}
when the correlation is neglected. This completes the proof.

\section{Proof of lemma \ref{inter-cluster}}
\label{app-LT_inter}
Laplace transform of the inter-cluster interference $I_{\Phi_p^{!}}$ can be evaluated as
\begin{align}
&\mathscr{L}_{I_{\Phi_p^{!}}}(s) = \mathbb{E} \Bigg[e^{-s \sum_{\Phi_p^{!}} \sum_{ \boldsymbol{y} \in \mathcal{B}^p}  g_{\boldsymbol{y}_{x}}  \lVert \boldsymbol{x}  +  \boldsymbol{y}\rVert^{-\alpha}} \Bigg] \nonumber \\ 
   &= \mathbb{E}_{\Phi_p} \Bigg[\prod_{\Phi_p^{!}} \mathbb{E}_{\Phi_{ci},g_{\boldsymbol{y}_{x}}} \prod_{ \boldsymbol{y} \in \mathcal{B}^p}  e^{-s g_{\boldsymbol{y}_{x}}   \lVert \boldsymbol{x}  +  \boldsymbol{y}\rVert^{-\alpha}} \Bigg]  
  \nonumber \\
   & = \mathbb{E}_{\Phi_p} \Bigg[\prod_{\Phi_p^{!}} \mathbb{E}_{\Phi_{ci}} \prod_{ \boldsymbol{y} \in \mathcal{B}^p} \mathbb{E}_{g_{\boldsymbol{y}_{x}}}  e^{-s g_{\boldsymbol{y}_{x}}   \lVert \boldsymbol{x}  +  \boldsymbol{y}\rVert^{-\alpha}} \Bigg]  \nonumber \\
    &\overset{(a)}{=} \mathbb{E}_{\Phi_p} \Bigg[\prod_{\Phi_p^{!}} \mathbb{E}_{\Phi_{ci}} \prod_{ \boldsymbol{y} \in \mathcal{B}^p} \frac{1}{1+s \lVert \boldsymbol{x}  +  \boldsymbol{y}\rVert^{-\alpha}} \Bigg]  
     \nonumber \\  &
    \label{LT_inter}
    \overset{(b)}{=} \mathbb{E}_{\Phi_p} \prod_{\Phi_p^{!}} {\rm exp}\Big(-p\bar{n} \int_{\mathbb{R}^2}\Big(1 - \frac{1}{1+s \lVert \boldsymbol{x}  +  \boldsymbol{y}\rVert^{-\alpha}}\Big)f_{\boldsymbol{Y}}(\boldsymbol{y})\dd{\boldsymbol{y}}\Big) 	 
   \nonumber	    \\
          &\overset{(c)}{=}  {\rm exp}\Bigg(-\lambda_p \int_{\mathbb{R}^2}\Big(1 -  {\rm exp}\Big(-p\bar{n} \int_{\mathbb{R}^2}\Big(1 - 
  \nonumber	
     \\
          &        
          \frac{1}{1+s \lVert \boldsymbol{x}  +  \boldsymbol{y}\rVert^{-\alpha}}\Big)f_{\boldsymbol{Y}}(\boldsymbol{y})\dd{\boldsymbol{y}}\Big)\dd{\boldsymbol{x}}\Bigg), 	             	
\end{align}
where (a) follows from the Rayleigh fading assumption, 
(b) follows from the \ac{PGFL} of the Gaussian \ac{PPP} $\Phi_{ci}$, and (c) follows from the \ac{PGFL} of the parent \ac{PPP} $\Phi_p$ \cite{haenggi2012stochastic}. By changing of variables $\boldsymbol{z} = \boldsymbol{x}  +  \boldsymbol{y}$ with $\dd \boldsymbol{z} = \dd \boldsymbol{y}$, we get
\begin{align}
\mathscr{L}_{I_{\Phi_p^{!}}}(s) &\overset{}{=}  {\rm exp}\Bigg(-\lambda_p \int_{\mathbb{R}^2}\Big(1 -  {\rm exp}\Big(-p\bar{n} \int_{\mathbb{R}^2}\Big(1 - 
\nonumber	
     \\
          &
          \frac{1}{1+s \lVert \boldsymbol{z}\rVert^{-\alpha}}\Big)f_{\boldsymbol{Y}}(\boldsymbol{z}-\boldsymbol{x})\dd{\boldsymbol{y}}\Big)\dd{\boldsymbol{x}}\Bigg)			 \nonumber	
          \end{align}
\begin{align}
          &\overset{(d)}{=}  {\rm exp}\Bigg(-2\pi\lambda_p \int_{v=0}^{\infty}\Big(1 -  {\rm exp}\Big(-p\bar{n} \int_{u=0}^{\infty}\Big(1 -
          \nonumber	
     \\
          &
           \frac{1}{1+s  u^{-\alpha}}\Big)f_U(u|v)\dd{u}\Big)v\dd{v}\Bigg)		 
          \nonumber \\
         &=  {\rm exp}\Big(-2\pi\lambda_p \int_{v=0}^{\infty}\Big(1 -  {\rm e}^{-p\bar{n} \varphi(s,v)}\Big)v\dd{v}\Big),	
\end{align}
where $\varphi(s,v) = \int_{u=0}^{\infty}\frac{s}{s+ u^{\alpha}}f_U(u|v)\dd{u}$; (d) follows from converting the cartesian coordinates to polar coordinates with $u=\lVert \boldsymbol{z}\rVert$. This completes the proof. 

\section {Proof of lemma \ref{lemma-optimal-w-1}}
\label{BW-allocate}
The weighted average service delay can be rewritten as 
\begin{align}
\label{optimize_eqn3_p1}
 T(\boldsymbol{b},W_d) &=
A \Big(W_d C - \zeta A \Big)^{-1}
+ B \Big((W-W_d)C\Upsilon_b - \eta\zeta B \Big)^{-1}, 
\end{align}
and the delay first and second derivatives with respect to $W_d$ will be given, respectively, by 
\begin{align}
 \frac{\partial T(\boldsymbol{b},W_d)}{\partial W_d} &= 
 -AC\Big(W_d C - \zeta A\Big)^{-2}   + 
 \nonumber	
     \\
          &
 B C \Upsilon_b \Big((W-W_d)C\Upsilon_b - \eta\zeta B \Big)^{-2}, 
\end{align}
and  
\begin{align}
  \frac{\partial^2 T(\boldsymbol{b},W_d)}{\partial W_d^2}&=
  2AC^2\Big(W_d C - \zeta A\Big)^{-3}   + 
  \nonumber	
     \\
          &
 2 B (C\Upsilon_b)^2 \Big((W-W_d)C\Upsilon_b - \eta\zeta B \Big)^{-3}.		
\end{align}
The stability conditions of Q$_d$ and Q$_b$, respectively, require that $W_d C>\zeta A$, and $(W-W_d)C\Upsilon_b > \eta\zeta B$. Hence, $\frac{\partial^2 T(\boldsymbol{b},W_d)}{\partial W_d^2}  > 0$, and, correspondingly,  the objective function is convex \ac{w.r.t.} $W_d$. The optimal bandwidth allocation can be directly obtained by from $\frac{\partial T(\boldsymbol{b},W_d)}{\partial W_d} =0$, which yields:
\begin{align}
 & \frac{AC}{\Big(W_d C - \zeta A\Big)^{2}}   = 
 \frac{B C \Upsilon_b}{\Big((W-W_d)C\Upsilon_b - \eta\zeta B \Big)^{2}}
 \nonumber \\
 \label{rearrange}
  & \sqrt{A}\Big((W-W_d)C\Upsilon_b - \eta\zeta B \Big)   = 
 \sqrt{B \Upsilon_b} \Big(W_d C - \zeta A\Big)
  \end{align}
 By rearranging (\ref{rearrange}), we get the optimal spectrum partitioning from 
 \begin{align}
 W_d^* &= \frac{\zeta A + \sqrt{\frac{A}{B \Upsilon_b}}\big(WC\Upsilon_b - \eta\zeta B \big)}{\big(C +  C\Upsilon_b\sqrt{\frac{A}{B \Upsilon_b}}\big)}. 
\end{align}
 This completes the proof. 
\textcolor{orange}{\cite{baza5,baza2,baza3,baza6,baza4,baza1,baza7,baza13,baza9,baza11,baza10,baza12,baza8,yilmaz2019expansion,yilmaz2}}

\end{appendices}
\bibliographystyle{IEEEtran}

\bibliography{bibliography}

\end{document}